\newtheorem{definition}{Definition}
\newtheorem{proposition}[definition]{Proposition}
\newtheorem{theorem}[definition]{Theorem}
\newtheorem{lemma}[definition]{Lemma}
\newtheorem{claim}[definition]{Claim}
\newcommand{\comment}[1]{}
\newcommand{\dom}{\hbox{\rm dom}}
\title{$b$-coloring is NP-hard on co-bipartite graphs and polytime solvable on tree-cographs \thanks{Partially supported by UBACyT Grant 20020100100980 and CONICET
PIP 112-200901-00178 (Argentina) and MathAmSud Project 13MATH-07
(Argentina--Brazil--Chile--France).}}
\author[1]{Flavia Bonomo}
\author[2]{Oliver Schaudt}
\author[3]{Maya Stein}
\author[4]{\\Mario Valencia-Pabon\thanks{Actually in "D\'el\'egation" at INRIA Nancy - Grand Est, 2013-2014.}}
\affil[1]{\small CONICET and Departamento de Computaci\'on,
Facultad de Ciencias Exactas y Naturales, Universidad de Buenos
Aires, Buenos Aires, Argentina. Email:~fbonomo@dc.uba.ar}
\affil[2]{\small Institut de Math\'ematiques de Jussieu, CNRS
UMR7586, Universit\'e Pierre et Marie Curie (Paris 6), Paris,
France. Email:~schaudt@math.jussieu.fr} 
\affil[3]{\small Centro de
Modelamiento Matem\'atico, Universidad de Chile, Santiago, Chile.
Email:~mstein@dim.uchile.cl} 
\affil[4]{\small Universit\'e Paris
13, Sorbonne Paris Cit\'e, LIPN, CNRS UMR7030, Villetaneuse, France. Email:~valencia@lipn.univ-paris13.fr}
\date{}
\begin{document}
\maketitle

\begin{abstract}
A \emph{b-coloring} of a graph is a proper coloring such that
every color class contains a vertex that is adjacent to all other
color classes. The \emph{b-chromatic number} of a graph $G$,
denoted by $\chi_b(G)$, is the maximum number $t$ such that $G$
admits a b-coloring with $t$ colors. A graph~$G$ is called
\emph{b-continuous} if it admits a b-coloring with $t$ colors, for
every $t = \chi(G),\ldots,\chi_b(G)$, and \emph{b-monotonic} if
$\chi_b(H_1) \geq \chi_b(H_2)$ for every induced subgraph $H_1$ of
$G$, and every induced subgraph $H_2$ of $H_1$.

We investigate the b-chromatic number of graphs with stability number two. 
These are exactly the complements of triangle-free graphs, 
thus including all complements of bipartite graphs. 
The main results of this work are the following:
\begin{enumerate}
\item We characterize the b-colorings of a graph with stability
number two in terms of matchings with no augmenting paths of
length one or three. 
We derive that graphs with stability number two are b-continuous and b-monotonic.
\item We prove that it is NP-complete to decide whether the b-chromatic
number of co-bipartite graph is at most a given threshold. 
\item 
We describe a polynomial
time dynamic programming algorithm to compute the b-chromatic
number of co-trees. 
\item Extending several previous results, we show that there is a polynomial time
dynamic programming algorithm for computing the b-chromatic number
of tree-cographs.
Moreover, we show that tree-cographs are b-continuous and
b-monotonic.
\end{enumerate}
\end{abstract}


\section{Introduction}

A \emph{b-coloring} of a graph $G$ by $k$ colors is a proper
coloring of the vertices of $G$ such that every color class
contains a vertex that is adjacent to all the other $k-1$ color
classes. Such a vertex will be called a \emph{dominating vertex}.
It is easy to see that any coloring of  a graph $G$ with $\chi(G)$
many colors  is a b-coloring (as usual, we denote by $\chi (G)$
the minimum number of colors needed for a proper coloring of the
vertices of a graph).

The \emph{b-chromatic number} of a graph $G$, denoted by
$\chi_b(G)$, is the maximum number $k$ such that $G$ admits a
b-coloring with $k$ colors. Clearly, $\chi_b(G) \leq \Delta(G) +1$
where $\Delta(G)$ denotes the maximum degree of $G$. The
b-chromatic number was introduced in \cite{I-M-b-col}. The
motivation, similarly as the well known achromatic number (cf.
e.g, \cite{Harary-Hedetniemi70, Bodlaender89} and ref. therein),
comes from algorithmic graph theory. Suppose one colors a given
graph properly, but in an arbitrary way. After all vertices are
colored, one would wish to perform some simple operations to
reduce the number of colors. A simple operation consists in
recoloring all the vertices in one color class with a possible
different color. Obviously, such recoloring is impossible if each
color class contains a dominating vertex. Hence, the b-chromatic
number of the graph serves as the tight upper bound for the number
of colors used by this coloring heuristic. From this point of
view, both complexity results and polynomial time algorithms for
particular graph families are interesting.

Assume that the vertices $v_1,v_2,\ldots,v_n$ of a graph $G$ are
ordered such that $d(v_1) \geq d(v_2) \geq \ldots \geq d(v_n)$,
where $d(x)$ denotes the degree of vertex $x$ in $G$. Let $$m(G)
:= \max\{i : d(v_i) \geq i-1\}$$ be the maximum number $i$ such
that $G$ contains at least $i$ vertices of degree $\geq i-1$. It
is clear that $m(G) \leq \Delta(G) +1$. Irving and Manlove
\cite{I-M-b-col} show that this parameter bounds the b-chromatic
number:

\begin{proposition}
For every graph $G$, $\chi(G) \leq \chi_b(G) \leq m(G)$.
\end{proposition}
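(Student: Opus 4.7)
My plan for the first inequality $\chi(G) \le \chi_b(G)$ is to show that every proper coloring of $G$ using $\chi(G)$ colors is already a b-coloring. I would argue by contradiction: if such a coloring $c$ had a color class $C_i$ without a dominating vertex, then every $v \in C_i$ would miss some color $j(v) \ne i$ in its neighborhood. Recoloring each $v \in C_i$ with its own $j(v)$ would then eliminate color $i$ entirely while preserving properness, yielding a proper coloring with only $\chi(G) - 1$ colors, a contradiction. The properness check has two parts: within $C_i$ no new conflict arises because its vertices form an independent set under $c$; externally, the choice of $j(v)$ is precisely what rules out a conflict between $v$ and its neighbors in other classes.

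For the second inequality $\chi_b(G) \le m(G)$, I would start from a b-coloring using $k = \chi_b(G)$ colors and select one dominating vertex $w_\ell$ from each color class, obtaining $k$ pairwise distinct vertices. By definition of a dominating vertex, each $w_\ell$ has at least one neighbor in every other color class, so $d(w_\ell) \ge k - 1$ for every $\ell$. Ordering the vertices of $G$ by non-increasing degree as in the definition of $m(G)$, the $k$-th vertex in the list must then have degree at least $\min_\ell d(w_\ell) \ge k - 1$, so the definition $m(G) = \max\{i : d(v_i) \ge i - 1\}$ forces $m(G) \ge k$.

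I expect the only mildly subtle step to be the simultaneous recoloring in the first part: the vertices of $C_i$ are all recolored at once but possibly with different target colors, so one has to check that no two of them clash with each other. This is exactly where the independence of $C_i$ (inherited from the fact that $c$ was proper) is needed. Once this is observed, the remainder of the argument is bookkeeping from the definitions of a b-coloring and of $m(G)$.
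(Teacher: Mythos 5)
Your proof is correct and matches the standard argument (the paper itself only cites this result from Irving and Manlove, noting in the introduction that any $\chi(G)$-coloring is a b-coloring, which is exactly your recoloring argument for the first inequality). The second inequality via selecting one dominating vertex of degree at least $k-1$ per class is likewise the intended argument, and your attention to the simultaneous recoloring of the independent set $C_i$ is the right place to be careful.
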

 Irving and Manlove \cite{I-M-b-col} also show that determining $\chi_b(G)$ is NP-complete for general graphs, but polynomial-time solvable for trees. Kratochv\'{\i}l, Tuza and Voigt \cite{K-T-V-b-col} prove that the problem of determining if $\chi_b(G) = m(G)$ is NP-complete even for connected bipartite graphs $G$ with $m(G) = \Delta(G) +1$. A graph $G$ is \emph{tight} if it has exactly $m(G)$ \emph{dense} vertices (a vertex $v$ of a graph $G$ is dense if $d(v) \geq m(G)-1$), each of which has degree exactly $m(G)-1$.  Havet, Linhares-Sales and Sampaio \cite{Havet-et-al12} recently investigated the problem on tight graphs. They proved that the problem of determining if a tight graph $G$ has $\chi_b(G) = m(G)$ is NP-complete for bipartite graphs and ptolemaic graphs, but polynomial-time solvable for complements of bipartite graphs, split graphs and block graphs.

In last years, several related concepts concerning b-colorings of
graphs have been studied in
\cite{Faik-tesis,Havet-et-al12,H-K-bperf,H-L-M-b-imperf,K-K-V-b-col}. A graph
$G$ is defined to be \emph{b-continuous} \cite{Faik-tesis} if it
admits a b-coloring with $t$ colors, for every $t=\chi(G), \dots,
\chi_b(G)$.  In \cite{K-K-V-b-col} (see also \cite{Faik-tesis}) it
is proved that chordal graphs and some planar graphs are
b-continuous. A graph $G$ is defined to be \emph{b-monotonic}
\cite{Bonomo-et-al09} if $\chi_b(H_1) \geq \chi_b(H_2)$ for every
induced subgraph $H_1$ of $G$, and every induced subgraph $H_2$ of
$H_1$. They prove  that $P_4$-sparse graphs (and, in particular,
cographs) are b-continuous and b-monotonic. Besides, they give a
dynamic programming algorithm to compute the b-chromatic number in
polynomial time within these graph classes.\\

Our paper is organized as follows. In the next section, we
characterize  b-colorings of  graphs with stability number two in
terms of matchings with no augmenting paths of length one or
three.

In Section 3, we prove that graphs with stability at most two are
both b-continuous and b-monotonic.

In Section 4, we prove that computing the b-chromatic number of
co-bipartite graphs is an NP-complete problem.

Finally, in Section 5,  first we describe a polynomial-time
dynamic programming algorithm to compute the b-chromatic number of
co-trees. Next, we extend our results to the family of
tree-cographs by showing that there is a polynomial time dynamic
programming algorithm for computing the b-chromatic number of
graphs in this family and that these are also b-continuous and
b-monotonic.


\section{b-colorings and matchings}

The \emph{stability} of a graph $G$ is defined as the maximum
cardinality of a subset of pairwise non-adjacent vertices in $G$.
Given a graph $G$, we denote by $\overline{G}$ the complement
graph of $G$, which is the graph on the same set of vertices as
$G$ that has an edge between two different vertices $u$ and $v$ if
and only if $u$ and $v$ are non-adjacent in $G$. It is not
difficult to see that $G$ is a graph with stability one if and
only if it is complete, and $G$ is a graph with stability at most
two if and only if $\overline{G}$ is a triangle-free graph. In
this section, we will see that matchings in triangle-free graphs
are very important when we deal with b-colorings of graphs with
stability at most two.

Let $M$ be a matching of a graph $G$. Denote by $V(M)$ the set of
all vertices covered by $M$. An \emph{augmenting path} for $M$ is
a path starting and ending outside $V(M)$ whose edges alternate
between $E(G)-M$ and $M$. Usually, $M$ is called \emph{maximal} if
no further edge can be included in $M$. In other words, $G$ does
not contain an augmenting path of length one with respect to $M$.
Following this terminology we call $M$ \emph{strongly maximal} if
$G$ does not contain augmenting paths of length one or three with
respect to $M$. Trivially, maximum matchings are strongly maximal,
and strongly maximal matchings are maximal. Our next lemma shows
why strongly maximal matchings are important in our setting.

\begin{lemma}\label{lem:matching}
Let $G$ be a graph of stability at most two and let $c$ be a
proper coloring of $G$.  Then $c$ is a $b$-coloring if and only if
the set
\[
M = \{uv : u,v \in V, u \neq v \mbox{ and } c(u)=c(v)\}
\]
is a strongly maximal matching in $\overline{G}$. Moreover, the
number of colors $c$ uses is $|V(G)|-|M|$.
\end{lemma}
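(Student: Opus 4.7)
The plan is to exploit the fact that any color class of $c$ is an independent set in $G$ and hence, since $G$ has stability at most two, has size at most $2$. So each color class is either a singleton or a pair $\{u,v\}$ with $uv \in E(\overline{G})$. This immediately shows that $M$ is a matching in $\overline{G}$, and if $k_2$ denotes the number of color classes of size $2$, then $|M|=k_2$ and the number of colors equals $|V(G)|-k_2 = |V(G)|-|M|$, giving the ``moreover'' part.

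For the equivalence I would argue both directions by contrapositive. First, suppose $M$ admits an augmenting path $P$ of length $1$ or $3$. If $P=uv$ has length $1$, then $u$ and $v$ are non-adjacent in $G$ and form singleton color classes, so the class $\{u\}$ has no dominating vertex and $c$ is not a b-coloring. If $P = a\text{-}b\text{-}c\text{-}d$ has length $3$, then $\{b,c\}$ is a color class of size $2$, while $\{a\}$ and $\{d\}$ are singleton classes with $a$ non-adjacent to $b$ and $d$ non-adjacent to $c$ in $G$. Hence neither $b$ nor $c$ is adjacent in $G$ to every other color class, so $\{b,c\}$ has no dominating vertex and $c$ is not a b-coloring.

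Conversely, assume $c$ is not a b-coloring, so some color class $X$ lacks a dominating vertex; I would split on $|X|$. If $X=\{u\}$, then $u$ misses some other color class $Y$, meaning $u$ is non-adjacent to every vertex of $Y$ in $G$; since $\{u\}\cup Y$ is then independent in $G$ and $G$ has stability at most $2$, necessarily $Y=\{y\}$, which gives $uy \in E(\overline{G})$ with $u,y \notin V(M)$, an augmenting path of length $1$. If $X=\{u,v\}$ with $uv\in M$, then $u$ misses some class $Y$ and $v$ misses some class $Z$; by the same stability-two argument each of $Y,Z$ is a singleton, say $Y=\{y\}$, $Z=\{z\}$. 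Moreover $y\neq z$, since otherwise $\{u,v,y\}$ would be an independent triple in $G$. Thus $y,z\notin V(M)$, $uy,vz\in E(\overline{G})\setminus M$, and $y\text{-}u\text{-}v\text{-}z$ is an augmenting path of length $3$.

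The main obstacle, in my view, is modest: it is recognizing that the stability-two hypothesis is exactly what is needed to rule out the ``bad'' cases (a singleton class missed by another class of size $2$, or the two singletons $\{y\}$ and $\{z\}$ coinciding), so that every failure of the b-coloring property is witnessed specifically by an augmenting path of length $1$ or $3$ rather than some longer alternating obstruction. Everything else is bookkeeping between color classes of $G$ and matched/unmatched vertices of $\overline{G}$.
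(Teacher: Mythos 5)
Your proposal is correct and follows essentially the same route as the paper's proof: the stability-two hypothesis forces every color class to have size at most two (so $M$ is a matching and the color count is $|V(G)|-|M|$), and the two kinds of b-coloring failures correspond exactly to augmenting paths of length one and three. Your added details (why the missed classes $Y,Z$ are singletons and why $y\neq z$) are just explicit versions of the paper's observation that every vertex meets any color class of size two.
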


\begin{proof}
First, observe that $M$ is a (possibly empty) matching of
$\overline{G}$ because $G$ has stability at most two. Now, suppose
that $\overline{G}$ contains an augmenting path $P$ of length $1$
or~$3$ for~$M$. If $P$ consists of only one edge $uv$, then in
$G$, the vertices $u$ and $v$ are non-adjacent, and each makes up
a singleton colour class. Thus $c$ is not a $b$-coloring. If $P$
has three edges, then for each of the endvertices of its middle
edge $uv$ there is a singleton color class which it does not see
in $G$. So the color class $\{u,v\}$ witnesses the fact that $c$
is not a $b$-coloring.

Next, suppose that $c$ is not a $b$-coloring. Note that, as $G$
has stability at most two, every vertex of $G$ is adjacent (in
$G$) to at least one vertex of any given color class of size~$2$.
So, the witness for $c$ not being a $b$-coloring is one of the
following two: either it is a singleton color class whose vertex
is non-adjacent to another singleton color class, or it is a color
class  $\{u,v\}$ of size two, such that  $u$ is non-adjacent to
some singleton color class, and $v$ is non-adjacent to a different
singleton color class. Clearly, the first situation corresponds to
an augmenting path of $M$ on one edge, and the second  situation
corresponds to an augmenting path of $M$ on three edges.
\end{proof}

Observe that coloring $c$ from Lemma~\ref{lem:matching} is a
maximum (minimum) $b$-coloring of $G$ if and only if $M$ is a
minimum (maximum) strongly maximal matching of $\overline{G}$.


\section{b-continuity and b-monotonicity of graphs with stability at most two}\label{sec:b-con_b-mon}

In order to prove the b-continuity of graphs with stability at
most two, we need the following result.

\begin{lemma}
\label{le-b-cont-2}
Let $M$ be a strongly maximal matching of a 
graph $G$ and let $P$ be a minimum length augmenting path in $G$
with respect to $M$. Then, the matching $M' = (M \setminus E(P))
\cup (E(P) \setminus M)$ is a strongly maximal matching of $G$,
and $|M'| = |M|+1$.
\end{lemma}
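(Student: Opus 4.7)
The plan is a direct combinatorial case analysis. The size claim $|M'|=|M|+1$ is immediate since the augmenting path $P$ contains exactly one more non-$M$ edge than $M$-edge. Moreover, strong maximality of $M$ forces $P$ to have length at least $5$, so I will write $P = v_0 v_1 \ldots v_{2k+1}$ with $k \geq 2$, where the $M$-edges of $P$ are exactly $v_{2j-1} v_{2j}$ for $j=1,\ldots,k$.

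To show that $M'$ is strongly maximal, I would suppose for contradiction that $M'$ admits an augmenting path $Q$ of length $1$ or $3$. Since every vertex of $P$ is matched by $M'$, both endpoints of $Q$ lie outside $V(P)$ and are therefore also unmatched by $M$. If $Q$ has length $1$, it is itself a length-$1$ augmenting path for $M$, contradicting strong maximality. If $Q = q_0 q_1 q_2 q_3$ has length $3$, then the middle edge $q_1 q_2 \in M'$ lies either in $M \setminus E(P)$ or in $E(P) \setminus M$. In the first subcase, neither $q_1$ nor $q_2$ can lie in $V(P)$, because otherwise that vertex's unique $M$-partner would also lie on $P$ via a $P$-edge, placing $q_1 q_2$ inside $E(P)$; hence $Q$ is a length-$3$ augmenting path for $M$, once more contradicting strong maximality.

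The main subcase is $q_1 q_2 = v_{2i} v_{2i+1}$ for some $i \in \{0,\ldots,k\}$, say with $q_1 = v_{2i}$. Here I would splice $q_0$ onto the prefix of $P$ from $v_{2i}$ down to $v_0$, obtaining $P' = q_0 v_{2i} v_{2i-1} \ldots v_0$. A parity check (the first edge $q_0 v_{2i}$ is not in $M$ since $q_0 \notin V(M)$, and alternation along $v_{2i} \ldots v_0$ follows from the structure of $P$) shows that $P'$ is an $M$-alternating path with both endpoints unmatched by $M$, hence an augmenting path for $M$ of length $2i+1$. For $1 \leq i \leq k-1$ this is strictly less than $|P|=2k+1$, contradicting the minimality of $P$. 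The boundary case $i=0$ collapses $P'$ to the single edge $q_0 v_0$ between two $M$-unmatched vertices, a length-$1$ augmenting path contradicting strong maximality; the case $i=k$ is symmetric, handled by splicing $q_3$ with $v_{2k+1}$ on the other side. The main obstacle is keeping the parities and boundary indices straight in this splicing subcase; the rest of the argument is essentially forced by the definitions.
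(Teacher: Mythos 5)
Your proof is correct and follows essentially the same strategy as the paper's: assume a length-$1$ or length-$3$ augmenting path $Q$ for $M'$, argue its middle edge must be an edge of $E(P)\setminus M$, and splice an endpoint of $Q$ onto a segment of $P$ to produce an augmenting path for $M$ that is shorter than $P$ (or of length $1$ or $3$), contradicting the minimality of $P$ or the strong maximality of $M$. Your treatment is in fact slightly more explicit than the paper's on the subcase where the middle edge lies in $M\setminus E(P)$ and on the boundary indices, but the underlying argument is the same.
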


\begin{proof}
Let $P=(x_1,x_2,\ldots,x_k)$. By basic results from matching
theory, the only thing we need to prove is that
 $M'$ is again strongly maximal.
Since the maximality of $M'$ is clear, suppose for contradiction
that there is an augmenting path of length 3, say $Q=(u,v,w,x)$.
Necessarily $vw$ is an edge of $M' \setminus M$, and thus
w.l.o.g.~there is some $i \in \{1,2,\ldots,k-1\}$ with $v=x_i$ and
$w=x_{i+1}$. Moreover, $u,x \notin V(M)$. Thus both paths
$(x_1,x_2,\ldots,x_i,u)$ and $(x,x_{i+1},x_{i+2},\ldots,x_k)$ are
augmenting paths for $M$ and at least one of these paths is
shorter that $P$. This is a contradiction to the choice of $P$.
\end{proof}

By Lemma \ref{lem:matching}, any b-coloring using $k > \chi(G)$
colors of  a graph $G$ of stability at most two corresponds to a
strongly maximal matching $M$ that is not maximum. By Berge's
lemma~\cite{Berge57}, there is an augmenting path for $M$. Using
Lemma~\ref{le-b-cont-2} we obtain a strongly maximal matching $M'$
of cardinality $|M|+1$, which, again by Lemma~\ref{lem:matching},
corresponds to a b-coloring with $k-1$ colors. Repeatedly applying
this argument gives the following result.

\begin{theorem}\label{thm:bconti}
Graphs of stability at most two are $b$-continuous.
\end{theorem}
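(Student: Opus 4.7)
The plan is to translate the entire problem into the matching language of Lemma~\ref{lem:matching} and then iterate the augmentation procedure from Lemma~\ref{le-b-cont-2}. Start with a b-coloring $c$ that uses exactly $\chi_b(G)$ colors. By Lemma~\ref{lem:matching}, $c$ corresponds to a strongly maximal matching $M_0$ of $\overline{G}$ with $|M_0|=|V(G)|-\chi_b(G)$. The goal is to produce, for each $t$ with $\chi(G)\leq t \leq \chi_b(G)$, a strongly maximal matching $M$ of $\overline{G}$ of size $|V(G)|-t$; Lemma~\ref{lem:matching} then converts each such $M$ back into a b-coloring with exactly $t$ colors.

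The iteration step is the heart of the argument. Given a strongly maximal matching $M_i$ of $\overline{G}$ corresponding to a b-coloring with $k>\chi(G)$ colors, I would argue that $M_i$ cannot be a maximum matching of $\overline{G}$: a maximum strongly maximal matching (which is in particular a maximum matching) would, via Lemma~\ref{lem:matching}, yield a b-coloring with the minimum possible number of colors, namely $\chi(G)<k$. Hence, by Berge's theorem~\cite{Berge57}, there is an augmenting path for $M_i$ in $\overline{G}$. Applying Lemma~\ref{le-b-cont-2} to a shortest such augmenting path gives a strongly maximal matching $M_{i+1}$ with $|M_{i+1}|=|M_i|+1$, which by Lemma~\ref{lem:matching} corresponds to a b-coloring using exactly $k-1$ colors.

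Repeating this procedure starting from $M_0$ produces a chain $M_0,M_1,\ldots,M_r$ of strongly maximal matchings of $\overline{G}$ whose sizes increase by exactly one at each step, with corresponding b-colorings using $\chi_b(G),\chi_b(G)-1,\ldots,\chi(G)$ colors. Since the step size is exactly one, no value of $t$ in the interval $[\chi(G),\chi_b(G)]$ is skipped, which is precisely what b-continuity demands.

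I do not expect a serious obstacle: the potentially delicate point, namely preservation of strong maximality under augmentation, has already been settled by Lemma~\ref{le-b-cont-2}, and the dictionary between b-colorings of $G$ and strongly maximal matchings of $\overline{G}$ is provided by Lemma~\ref{lem:matching}. The only thing to be slightly careful about is justifying why a non-maximum strongly maximal matching must admit an augmenting path at all; this is immediate from Berge's theorem applied to $\overline{G}$, since ``augmenting path'' is the usual matching-theoretic notion and does not require any strong-maximality assumption.
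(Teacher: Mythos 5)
Your proposal is correct and follows essentially the same route as the paper: translate b-colorings into strongly maximal matchings of $\overline{G}$ via Lemma~\ref{lem:matching}, note that a matching corresponding to a b-coloring with $k>\chi(G)$ colors cannot be maximum, invoke Berge's theorem to obtain an augmenting path, and apply Lemma~\ref{le-b-cont-2} to a shortest one, iterating until $\chi(G)$ colors are reached. The only difference is cosmetic: you spell out explicitly why the matching is not maximum, a point the paper states without elaboration.
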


Given a maximum b-coloring  of  a graph $G$ of stability at most
two, we can thus find b-colorings for all values between $\chi(G)$
and $\chi_b(G)$. Moreover, we can do this in polynomial time,
provided we can find a minimum length augmenting path for a given
matching in polynomial time. This is the aim of the following
lemma that can be derived by a slight modification of Edmonds' blossom algorithm \cite{edmonds_match}.

\begin{lemma}
\label{le-b-cont-1} Let $M$ be a  matching in a graph $G$. Then, a
minimum length augmenting path $P$ in $G$ with respect to $M$ can
be computed in polynomial time.
\end{lemma}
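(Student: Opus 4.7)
The plan is to describe a breadth-first variant of Edmonds' blossom algorithm that simultaneously roots an alternating BFS forest at every $M$-unsaturated vertex, so that the first augmenting path discovered is of minimum length. First, I would initialize the search by placing every $M$-unsaturated vertex in a FIFO queue as an \emph{outer} vertex at level~$0$, with all other vertices unlabelled. Throughout, the invariant is that outer vertices are reached from their root via an alternating path of even length, and inner vertices via one of odd length.

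Next, I would iterate the usual alternating-tree growth strictly in BFS order. When an outer vertex $u$ is dequeued and a non-matching edge $uv$ is examined, one of three things can happen: (i) if $v$ is unlabelled and $M$-unsaturated, return the length-$1$ augmenting path $uv$; (ii) if $v$ is unlabelled and $M$-saturated with mate $w$, label $v$ inner at the next level and $w$ outer at the level after, and enqueue $w$; (iii) if $v$ is already outer and lies in a different tree, return the concatenation of the root-to-$u$ and root-to-$v$ tree paths through the edge $uv$; (iv) if $v$ is already outer and lies in the same tree (possibly after previous contractions), a blossom has appeared, which I contract into a single outer pseudovertex and continue the BFS from there. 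When an augmenting path is eventually returned in the contracted graph, I recover an augmenting path of $G$ by expanding each contracted blossom in reverse order, routing through it along the unique even-length alternating arc determined by its entry and exit vertices.

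Correctness rests on two facts: BFS outputs the first augmenting path in order of non-decreasing level, and blossom contraction followed by the correct expansion does not change parity nor inflate the length of a shortest augmenting path. The main obstacle is verifying the second fact rigorously, i.e., that no shorter augmenting path is hidden inside a contracted blossom and that the expansion at the end is tight. This is precisely the classical correctness argument behind shortest augmenting path matching algorithms, carried out in detail by Micali and Vazirani; for our purposes we do not need their $O(\sqrt{|V|}\,|E|)$ bound, and a direct implementation using union-find data structures to track the blossoms runs in time polynomial in $|V(G)|$ and $|E(G)|$, which is all that Lemma~\ref{le-b-cont-1} requires.
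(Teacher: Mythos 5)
Your approach is essentially the paper's: the paper offers no proof of this lemma beyond the single remark that it ``can be derived by a slight modification of Edmonds' blossom algorithm,'' and your BFS-layered blossom search is exactly such a modification, so you are if anything more detailed than the source.

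One caution, though. The literal algorithm you describe --- one inner/outer label and one level per vertex, terminate at the first augmenting path discovered --- is known \emph{not} to be correct for shortest augmenting paths on its own. After a blossom is contracted, a vertex of that blossom admits both an even-length and an odd-length alternating path from the roots; the single BFS level assigned to it before contraction need not equal its true shortest alternating distance, and expanding the blossoms at the end can lengthen the first-discovered path beyond a competitor that the truncated search never examined. This is precisely why Micali and Vazirani must introduce separate even and odd levels and the notion of tenacity, and why their algorithm is considerably more intricate than ``Edmonds in BFS order.'' You correctly identify this as the crux and defer to their analysis, but be aware that the deferral is doing all the work: the clean version of your proof is simply ``run one phase of the Micali--Vazirani algorithm (or any published shortest-augmenting-path subroutine for general graphs), which returns a minimum-length augmenting path in polynomial time,'' rather than the naive labelling you sketch. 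Since the lemma asks only for polynomial time, that citation is entirely adequate --- and it is no more of an appeal to the literature than the paper's own one-line reference to Edmonds.
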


Lemma \ref{le-b-cont-1} together with the proof of
Theorem~\ref{thm:bconti} implies that given a graph $G$ of
stability at most two, and a b-coloring of $G$ using $k > \chi(G)$
colors, we can compute in polynomial time a b-coloring for $G$
with $k-1$ colors. Notice that the converse is not necessarily
true, i.e., if we have a b-coloring of $G$ using $k < \chi_b(G)$
colors, we do not know how to compute in polynomial time a
b-coloring for $G$ with $k+1$ colors. Indeed, we will prove in the
next section that the problem of computing the b-chromatic number
of a graph with stability at most two is NP-complete, even
restricted to the smaller class of co-bipartite graphs.

\medskip

We now turn to the $b$-monotonicity of graphs of stability at most two.

\begin{theorem}
Graphs of stability at most two are $b$-monotonic.
\end{theorem}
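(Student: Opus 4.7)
The plan is to prove that $\chi_b(G) \geq \chi_b(G - v)$ for every $v \in V(G)$; since the class of graphs of stability at most two is hereditary, iterating this statement over the vertices that must be deleted then yields the theorem. So let $v \in V(G)$, set $H := G - v$ and $k := \chi_b(H)$. By Lemma~\ref{lem:matching}, I fix a b-coloring $c$ of $H$ with $k$ colors, and take $M$ to be the corresponding strongly maximal matching of $\overline{H}$, so that $|M| = |V(H)| - k$. It will suffice to exhibit a strongly maximal matching $M^*$ of $\overline{G}$ with $|M^*| \leq |M| + 1$, because Lemma~\ref{lem:matching} will then convert $M^*$ into a b-coloring of $G$ with at least $|V(G)| - |M^*| \geq k$ colors.

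If $M$ is already strongly maximal in $\overline{G}$, I take $M^* = M$. Otherwise, I pick a \emph{shortest} augmenting path $P$ for $M$ in $\overline{G}$, which has length $1$ or $3$. Because $M$ is strongly maximal in $\overline{H}$, this $P$ must involve $v$; and since $v \notin V(M)$, $v$ can only occur as an endpoint of $P$. I set $M^* := M \triangle E(P)$, so that $|M^*| = |M| + 1$. The remaining work is to show that $M^*$ is strongly maximal in $\overline{G}$.

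I verify this by ruling out augmenting paths of length $1$ and $3$ for $M^*$ in $\overline{G}$. Since $V(M^*) = V(M) \cup V(P)$, every $M^*$-unmatched vertex lies in $V(H) \setminus V(P)$ and in particular is a singleton color class of $c$. A putative length-$1$ augmenting path for $M^*$ would then live in $\overline{H}$ and augment $M$ there, contradicting the strong maximality of $M$ in $\overline{H}$. For a length-$3$ augmenting path $(e_1,e_2,e_3,e_4)$ for $M^*$, I split on the middle edge $e_2 e_3$. If $e_2 e_3 \in M$, then all four vertices lie in $V(H)$ and we again get an augmenting path for $M$ in $\overline{H}$, the same contradiction. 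If $e_2 e_3$ is one of the new edges of $P$, then I use two structural facts: the set $N_{\overline{G}}(v)$ is independent in the triangle-free graph $\overline{G}$ and hence a clique in $G$, so each color class of $c$ meets $N_{\overline{G}}(v)$ in at most one vertex; and, because $P$ was chosen shortest, when $|P|=3$ no singleton class of $c$ can lie inside $N_{\overline{G}}(v)$. These facts let me trace each subcase back to a forbidden length-$1$ or length-$3$ augmenting configuration for $M$ in $\overline{H}$.

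The main obstacle is this last check: when $|P|=3$, $M^*$ contains two new edges, and each can a priori serve as the middle edge $e_2 e_3$ of a length-$3$ augmenting path for $M^*$. Handling both cases requires combining the triangle-freeness of $\overline{G}$ (to control $N_{\overline{G}}(v)$) with the shortest choice of $P$, so that any bad augmenting path for $M^*$ restricts to a forbidden augmenting configuration of $M$ already in $\overline{H}$.
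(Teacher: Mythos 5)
Your proof is correct and follows essentially the same route as the paper's: reduce to deletion of a single vertex, take the symmetric difference of $M$ with a shortest augmenting path $P$ through $v$, and rule out new augmenting paths of length $1$ or $3$ for $M^*$ by a case analysis on the middle edge, where the minimality of $P$ handles the case of middle edge incident to $v$ and the maximality of $M$ in $\overline{H}$ handles the rest. The only cosmetic difference is your appeal to triangle-freeness of $\overline{G}$, which is superfluous: the relevant fact is simply that an $M$-unmatched $\overline{G}$-neighbour of $v$ would be a length-$1$ augmenting path contradicting the choice of $P$, exactly as in the paper.
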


\begin{proof}
The class of graphs of stability at most two is closed under taking induced subgraphs.
Thus we only have to prove that $\chi_b$ is monotonously decreasing under the deletion of a vertex.
In view of Lemma~\ref{lem:matching}, it is sufficient to show that given a graph $G$ of stability $2$ and some vertex $v \in V(G)$ the following holds:
If there is a strongly maximal matching of $\overline{G-v}$ of size $k$, then there is a strongly maximal matching of $\overline{G}$ of size at most $k+1$.
This implies $\chi_b(G) \ge \chi_b(G-v)$.

Let $M$ be a strongly maximal matching of $\overline{G-v}$.
We now consider $M$ as a matching of $\overline{G}$.
If $M$ is a strongly maximal matching of $\overline{G}$, we are done.
So we assume that $M$ is not strongly maximal, and thus there is an augmenting path $P$ of length at most 3.
We may choose $P$ of minimum length among all augmenting paths of $M$ in $\overline{G}$. Note that $P$ meets $v$, say $P$ starts in $v$.

We assume first that $P$ is of length 1, that is, $v$ has an unmatched neighbor in $\overline{G}$, say $u$.
We claim the matching $M' = M \cup \{uv\}$ is strongly maximal, in which case we are done. Indeed, otherwise there is an augmenting path of length 3 for $M'$, and $uv$ is the central edge of this path. So, there is a neighbor of $u$, say $w$, that is not matched by $M'$.
In particular, $v \neq w$.
Thus $uw$ is disjoint from $M$, contradicting the maximality of $M$ in $\overline{G-v}$.
This  proves our claim.

Now assume that $P$ is of length 3, say $P = (v,u,w,x)$.
Let $M' = (M \setminus \{uw\}) \cup \{vu,wx\}$.
Suppose that $M'$ is not strongly maximal in $\overline{G}$.
Then there is an augmenting path of length 3, with central edge either $vu$ or $wx$.
In either case, $x$ or $v$ has a neighbor in $\overline{G}$ that is unmatched by $M'$ and thus also by $M$, a contradiction either to the maximality of $M$ in $\overline{G-v}$ or to the minimality of $P$.
This completes the proof.
\end{proof}


\section{NP-hardness result for co-bipartite graphs}

As mentioned in Section 1, Havet, Linhares-Sales and Sampaio \cite{Havet-et-al12} proved that the problem of determining if a tight co-bipartite graph $G$ has $\chi_b(G) = m(G)$ is polynomial-time solvable. 
However, the computational complexity of $\chi_b$ in the class of co-bipartite graphs is left open. 
In the next theorem, we prove that b-coloring general co-bipartite graphs is a hard problem.

\begin{theorem}\label{thm:hardness}
Given a co-bipartite graph $G$ and a natural number $k$, it is NP-complete to decide whether $G$ admits a $b$-coloring with at least $k$ colors.
\end{theorem}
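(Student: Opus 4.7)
The plan is to combine the matching characterization of Lemma~\ref{lem:matching} with a reduction from a known NP-hard matching problem. Membership in NP is immediate: a candidate $b$-coloring with at least $k$ colors serves as a short certificate whose validity (properness together with the presence of a dominating vertex in each color class) can be checked in polynomial time.

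For the hardness direction, observe that if $G$ is co-bipartite then $\overline G$ is bipartite and hence triangle-free, so $G$ has stability at most two. By Lemma~\ref{lem:matching}, $b$-colorings of $G$ using exactly $t$ colors are in bijection with strongly maximal matchings of $\overline G$ of size $|V(G)|-t$. Therefore deciding $\chi_b(G)\ge k$ on co-bipartite inputs is polynomial-time equivalent to deciding whether a given bipartite graph admits a strongly maximal matching of size at most a given threshold. It thus suffices to show that this latter problem is NP-hard on bipartite graphs.

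I would reduce from the minimum maximal matching problem, which is classically NP-hard on bipartite graphs. The delicate point is that strongly maximal matchings are more restrictive than merely maximal ones, so a direct parameter translation does not work: a minimum maximal matching in the source may very well carry a length-three augmenting path. The strategy is to build from the source bipartite graph $H$ a new bipartite graph $H^{*}$ by attaching to every vertex of $H$ a small bipartite forcing gadget (for example a short pendant path terminating in a degree-one vertex). The gadget should be designed so that (i) a fixed number of gadget edges are present in every maximal matching of $H^{*}$, yielding a clean parameter shift; (ii) once those edges are forced, the original vertices of $H$ are ``shielded'' in the sense that no length-three augmenting path of $H^{*}$ can use a gadget edge as middle edge or as outer edge; and (iii) the restriction to $H$ of a maximal matching of $H^{*}$ ranges precisely over the maximal matchings of $H$. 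If these properties are met, strongly maximal matchings of $H^{*}$ of bounded size correspond to maximal matchings of $H$ of bounded size, completing the reduction.

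The main obstacle will be the design of the forcing gadget. On the one hand, the gadget must be strong enough to rule out every length-three augmenting path in $H^{*}$ involving its edges; on the other hand, it must not artificially rule out maximal matchings of $H$ whose internal length-three augmenting paths are needed for the MMM instance. Balancing these two demands, while preserving bipartiteness of $H^{*}$ and keeping the gadget of polynomial size, is the technical heart of the argument. A plausible candidate is to attach to each vertex a small collection of forced pendants, possibly combined with local blow-ups of the endpoints of original edges, so that every potential length-three augmenting path is absorbed into the gadget's forced edges. Verifying that the resulting correspondence between maximal matchings of $H$ and strongly maximal matchings of $H^{*}$ is exact in both directions will require a careful case analysis, but once established the NP-completeness of the problem on co-bipartite graphs follows immediately.
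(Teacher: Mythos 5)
Your framework is the same as the paper's: NP membership by certificate, Lemma~\ref{lem:matching} to translate $\chi_b(G)\ge k$ on co-bipartite $G$ into the existence of a strongly maximal matching of size at most $|V(G)|-k$ in the bipartite graph $\overline G$, and a reduction from minimum maximal matching on bipartite graphs (Yannakakis--Gavril). But the proof stops exactly where the work begins. You explicitly defer the construction of the forcing gadget (``a plausible candidate is\dots'', ``will require a careful case analysis, but once established\dots''), and that gadget, together with the verification that minimum strongly maximal matchings of the new graph correspond to minimum maximal matchings of the old one with a clean additive shift, is the entire technical content of the theorem. As written, the hardness direction is a plan, not a proof.

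Moreover, the design you sketch --- keep the original edges of $H$ and attach pendant gadgets to its \emph{vertices} --- runs into precisely the tension you name but do not resolve. To shield an original vertex from serving as the unmatched endpoint of a length-three augmenting path, the gadget must force that vertex to be matched; but a minimum maximal matching of $H$ generally leaves many vertices unmatched, and forcing them all to be matched destroys the correspondence (and collapses the problem toward near-perfect matchings). The paper escapes this by working per \emph{edge} rather than per vertex: each edge $uv$ of $G$ is deleted and replaced by a bipartite gadget on eight new vertices (two pendant paths of length four hanging from $u$ and $v$, linked by one extra edge), and membership of $uv$ in the simulated matching is encoded by \emph{which} of two canonical matchings ($F^{\in}_{uv}$, using $ux^1_{uv}$ and $vx^1_{vu}$, versus $F^{\notin}_{uv}$) the gadget carries. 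Since every neighbour of an original vertex is then a gadget vertex that is always matched, unmatched original vertices cause no short augmenting paths, and a structural claim (the paper's Claim~\ref{claimNP}) normalizes any minimum strongly maximal matching into one of these two canonical forms per gadget, giving $|M'|=|M|+3m$. Without an explicit gadget and the analogue of that normalization argument, your reduction is not established.
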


\begin{proof}
By Lemma~\ref{lem:matching}, it suffices to prove that it is NP-complete to decide whether a bipartite graph $G$ admits a strongly maximal matching containing at most $k$ edges, when $G$ and $k$ are given input.

Our reduction is from the minimum maximal matching problem which is to decide whether a given graph admits a maximal matching of at most $k$ edges, for given $k$.
This problem is NP-complete even if the instances are restricted to bipartite graphs, as shown by Yannakakis and Gavril~\cite{Yannakakis-Gavril80}.

Given a bipartite graph $G$ with $m$ edges, we define a new graph $H_G$ as follows.
For each edge $uv \in E(G)$ we introduce a set of new vertices
\[
X_{uv}=\{x^1_{uv},x^2_{uv},x^3_{uv},x^4_{uv},x^1_{vu},x^2_{vu},x^3_{vu},x^4_{vu}\}
\]
and edges
\[
F_{uv}=\{ux^1_{uv},x^1_{uv}x^2_{uv},x^2_{uv}x^3_{uv},x^3_{uv}x^4_{uv},x^1_{uv}x^1_{vu},x^1_{vu}x^2_{vu},x^2_{vu}x^3_{vu},x^3_{vu}x^4_{vu},vx^1_{vu}\}.
\]
Note that $X_{uv}=X_{vu}$ and $F_{uv}=F_{vu}$.
Then $H_G$ is defined by
\begin{align*}
V(H_G) & = V(G) \cup \bigcup_{uv \in E(G)} X_{uv},\\
E(H_G) & = \bigcup_{uv \in E(G)} F_{uv}.
\end{align*}
Clearly $H_G$ can be computed in polynomial time.
Moreover, $H_G$ is bipartite since $G$ is. For each edge $uv \in E(G)$, we define the following auxiliary sets of edges in $H_G$:
\[
F_{uv}^{\in}=\{ux^1_{uv},x^2_{uv}x^3_{uv},x^2_{vu}x^3_{vu},vx^1_{vu}\}\text{ and }F_{uv}^{\notin}=\{ x^1_{uv}x^1_{vu},x^2_{uv}x^3_{uv},x^2_{vu}x^3_{vu}\}
\]
We claim the following:
\begin{claim}
\label{claimNP}
There exists a minimum strongly maximal matching $M$ of $H_G$ such that
\begin{equation*}\label{motte}
\text{$x_{uv}^3x_{uv}^4\notin M$ for each edge $uv\in E(G)$.}
\end{equation*}
Moreover, $M$ can be obtained from any minimum strongly maximal matching of $H_G$ in polynomial time.
\end{claim}
In order to prove this claim, we proceed by contradiction. Assume
that every minimum strongly maximal matching of $H_G$ contains at
least an edge $x_{uv}^3x_{uv}^4$ for some edge $uv\in E(G)$, and
let $M$ be a minimum strongly maximal matching of $H_G$ having a
minimum number of edges of the form $x_{uv}^3x_{uv}^4$. Note that
the choice of $M$ implies that for every edge $uv \in E(G)$ we
have that
\begin{itemize}
\item[$(i)$] $x_{uv}^3x_{uv}^4\in M$ if and only if
$x_{uv}^1x_{uv}^2\in M$. If $x_{uv}^1x_{uv}^2\in M$ then
$x_{uv}^3x_{uv}^4\in M$, otherwise, $M$ is not maximal. If
$x_{uv}^3x_{uv}^4\in M$ then $x_{uv}^1x_{uv}^2\in M$, otherwise,
we could replace $x_{uv}^3x_{uv}^4$ by $x_{uv}^2x_{uv}^3$ in $M$
(the resulting matching is strongly maximal as $M$ is so),
contradicting the choice of $M$.

\item[$(ii)$] If the edges
$x_{uv}^3x_{uv}^4$ and $x_{uv}^1x_{uv}^2$ are in $M$, then we have
that vertices $u$ and $x^1_{vu}$ are each matched by $M$.
Otherwise, if $u$ is unmatched, we can replace $x_{uv}^1x_{uv}^2,
x_{uv}^3x_{uv}^4 \in M$ with the edges $x_{uv}^2x_{uv}^3,
ux_{uv}^1$. This again yields a strongly maximal matching (since
$u$ has no neighbors unmatched by $M$), contradicting the choice
of $M$. We can use the same argument in the case $x^1_{vu}$ is
unmatched.
\end{itemize}

This is also some of the steps in order to transform any minimum strongly maximal matching into the desired one.

Now, let $uv$ be and edge in the graph $G$ such that $x_{uv}^3x_{uv}^4\in M$. By $(i)$ and $(ii)$, we can deduce that  $|M \cap F_{uv}|=4$. Consider the matching
\[
\tilde M:=(M \setminus F_{uv}) \cup F_{uv}^{\notin}
\]
We claim that $\tilde M$ is strongly maximal. As $\tilde M$ is smaller than $M$, we thus obtain the desired contradiction.

So assume $\tilde M$ is not strongly maximal. Then, as $u$ is matched, there is a augmenting path $P$ of length $1$ or $3$ starting at $v$.

Now, observe that all neighbors of $v$ are of the form $x^1_{vw}$ (for some $w\in V(G)$), and thus, as neither $x^1_{vw}x^2_{vw}$ nor $x^1_{vw}x^2_{vw}x^3_{vw}x^4_{vw}$ is an augmenting path for the strongly maximal matching $M$, all neighbors of $v$ are matched by $M$.

 So, $P$ has length $3$, and it is easy to see that $P$ has to end in some (unmatched) vertex $w\in V(G)\setminus\{u,v\}$ (by the maximality of $M$, every vertex $x^3_{wz}$ is matched by $M$, and by the choice of $M$, every vertex $x^2_{wz}$ is matched by $M$). By $(i)$ and $(ii)$, we know that $F_{vw}\cap M=F^{\notin}_{vw}$. Consider the matching
\[
 (\tilde M \setminus F_{vw}^{\notin}) \cup F_{vw}^{\in}.
\]
This matching is clearly strongly maximal, and has fewer edges of the form $x_{uv}^3x_{uv}^4$, contradicting the choice of $M$. (And this is the remaining step in order to transform any minimum strongly maximal matching into the desired one.) This ends the proof of Claim \ref{claimNP}.\\

Therefore, by Claim \ref{claimNP}, we have that there is a minimum strongly maximal $M'$ in $H_G$ that verifies either $F_{uv} \cap M' =  F^{\in}_{uv}$ or $F_{uv} \cap M' = F^{\notin}_{uv}$ for each edge $uv \in E(G)$.

Next we show that if $M$ is a minimum maximal matching of $G$ and $M'$ is a minimum strongly maximal matching of $H_G$, $|M| = |M'|-3m$.
As explained above, this completes the proof.

Let $M$ be a minimum maximal matching of $G$. Using the auxiliary sets $F_{uv}^{\in}$ and $F_{uv}^{\notin}$, we define a strongly maximal matching $M'$ of $H_G$ by
\begin{align*}
M' = \bigcup_{uv \in M}F_{uv}^{\in}\cup \bigcup_{uv \notin M}F_{uv}^{\notin}.
\end{align*}
Note that $|M'| = |M| + 3m$.

Now, let $M'$ be a minimum strongly maximal matching of $H_G$ that verifies either $F_{uv} \cap M' =  F^{\in}_{uv}$ or $F_{uv} \cap M' = F^{\notin}_{uv}$ for each edge $uv \in E(G)$. We define a maximal matching $M$ of $G$ by setting
\[
M = \{uv : uv\in E(G), \ F_{uv}\cap M'= F^{\in}_{uv}\}.
\]
Clearly $|M| = |M'|-3m$, which completes the proof.
\end{proof}


\section{b-coloring co-trees and tree-cographs}

\subsection{co-trees}

\begin{theorem}\label{thm:algo-Xb-cotrees}
In the class of co-trees, $\chi_b$ can be computed in polynomial time.
\end{theorem}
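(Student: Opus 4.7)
The plan is to invoke Lemma~\ref{lem:matching}: if $G=\overline T$ with $T$ a tree, then $G$ has stability at most two (since $T$ is triangle-free), and $\chi_b(G)=|V(T)|-\mu(T)$, where $\mu(T)$ denotes the minimum size of a strongly maximal matching of $T$. The task therefore reduces to computing $\mu(T)$ on trees in polynomial time. Moreover, since $T$ is triangle-free, two adjacent vertices of $T$ have disjoint neighborhoods, so strong maximality in $T$ simplifies to the clean statement: $M$ is maximal, and for every edge $vw\in M$, at most one of $v$ and $w$ has an unmatched neighbor.

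To compute $\mu(T)$ I would set up a bottom-up dynamic program on $T$ rooted at an arbitrary vertex $r$. At each vertex $v$ I would store the minimum size of a matching of the subtree $T_v$ which is already strongly maximal ``inside $T_v$'', classified by one of five states encoding the interface information needed to merge with the rest of the tree: (a) $v$ is unmatched in the final matching; (b) $v$ is to be matched to its parent and has no unmatched neighbor in $T_v$; (c) as (b) but $v$ does have an unmatched neighbor in $T_v$; (d) $v$ is matched inside $T_v$ to a child $w$ and $w$ has no unmatched neighbor in $T_w$; (e) as (d) but $w$ does have an unmatched neighbor. The two ``matched to a child'' variants are what track the length-3 augmenting-path condition: in state (e), $v$ is forced to have no unmatched neighbor at all, which in turn requires every other child and the parent of $v$ to be matched; analogously in state (c) the (yet unseen) parent will be constrained.

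The recurrence at $v$ combines the five states of $v$'s children $w_i$ by a min-plus sum subject to simple compatibility constraints. For example, state (a) at $v$ forbids any child in state (e) (since (e) demands its parent be matched) and so forces every child into state (d); state (e) at $v$ picks one child in state (c) and every other child in state (d) or (e); state (d) at $v$ picks one child in state (b) and lets every other child choose freely among (a), (d), (e). Each such recurrence is evaluable in time linear in $\deg(v)$ by the standard trick of taking per-child minima and, when a state is required to appear at least once, swapping in the cheapest forced child, so that the whole DP runs in $O(|V(T)|)$ time. The value $\mu(T)$ is then read off as the minimum of states (a), (d), (e) at the root (the other two assume a parent that $r$ does not have), and $\chi_b(G)=|V(T)|-\mu(T)$.

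The main obstacle is to see that these five states carry enough information: that no further data about $T_v$ must be propagated upward to check the strong maximality of future matched edges incident to (a neighbor of) $v$. This reduces, again via triangle-freeness of $T$ (which ensures that the two unmatched neighbors required by a length-3 augmenting path through $vw$ are automatically distinct), to tracking only whether $v$ is matched in $T_v$, whether its matched partner has an unmatched neighbor, and (if $v$ is unmatched in $T_v$) whether $v$ itself has one. Once the state space is fixed, verifying the recurrences is a routine if somewhat lengthy case analysis.
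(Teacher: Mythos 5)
Your proposal is correct and follows essentially the same route as the paper: Lemma~\ref{lem:matching} reduces the problem to computing a minimum strongly maximal matching of the underlying tree, which is then solved by a linear-time dynamic program with five states per vertex recording the matching status of the vertex and whether it (or its matched partner) has an unmatched neighbour in its subtree. The paper's five functions $F_1,\dots,F_5$ (defined on subtrees rooted at a leaf-edge, with the constraint on the parent side ``pushed down'' rather than ``pulled up'') encode exactly the same interface information as your states (a)--(e), so the two dynamic programs are equivalent up to reparameterization.
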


\begin{proof}
According to Lemma~\ref{lem:matching}, the problem is equivalent to find a minimum strongly maximal matching (\textsc{msmm}) in a tree.
We will do it by dynamic programming. In order to do so, we will define five functions $F_i(r,s)$, $i=1,\dots,5$, for a nontrivial tree $T_{rs}$ rooted at a leaf $r$ with neighbor $s$.
As we will apply them to the subtrees of a tree, we will assume that $r$ can have neighbors outside $T_{rs}$.
\begin{itemize}
\item $F_1(r,s)$: cardinality of a \textsc{msmm} of $T_{rs}$ such that $r$ is unmatched, and $\infty$ if it does not exist.

\item $F_2(r,s)$: cardinality of a \textsc{msmm} of $T_{rs}$ that uses the edge $rs$ and such that $s$ may or may not have an unmatched neighbor (this case will apply when $r$ has no unmatched neighbor outside $T_{rs}$), and $\infty$ if it does not exist.

\item $F_3(r,s)$: cardinality of a \textsc{msmm} of $T_{rs}$ that uses the edge $rs$ and such that $s$ cannot have an unmatched neighbor (this case will apply when $r$ has already an unmatched neighbor outside $T_{rs}$, so an unmatched neighbor of $s$ will complete an augmenting path of length $3$ in the whole tree), and $\infty$ if it does not exist.

\item $F_4(r,s)$: cardinality of a \textsc{msmm} of $T_{rs}$ such that the vertex $s$ is matched with some vertex different from $r$ and the vertex $r$ is considered as ``already matched'' (this case will apply when $r$ is already matched with a vertex outside $T_{rs}$), and $\infty$ if it does not exist.

\item $F_5(r,s)$: cardinality of a \textsc{msmm} of $T_{rs}$ such that the vertex $s$ remains unmatched and the vertex $r$ is considered as ``already matched'', and $\infty$ if it does not exist.
\end{itemize}

With these definitions, for the base case in which $V(T_{rs})=\{r,s\}$, we have

\begin{itemize}
\item $F_1(r,s)=\infty$ (if $r$ is unmatched and $s$ has no further neighbors, the matching will never be maximal)
\item $F_2(r,s)=1$ (precisely, the edge $rs$)
\item $F_3(r,s)=1$ (precisely, the edge $rs$)
\item $F_4(r,s)=\infty$ (it is not feasible because $s$ has no further neighbors)
\item $F_5(r,s)=0$
\end{itemize}

For the case in which $s$ has children $v_1, \dots, v_k$, we have

\begin{itemize}
\item $F_1(r,s)=\min_{i=1,\dots,k} \{F_3(s,v_i)+\sum_{j=1,\dots,k;
 j \neq i} \min\{F_4(s,v_j),F_5(s,v_j)\}\}$.

 In order to obtain a
maximal matching, we need to match $s$ with some of its children,
say $v_i$. Since $r$ will be unmatched, $v_i$ should not have an
unmatched neighbor, in order to prevent an augmenting path of
length $3$. When considering the trees $T_{sv_j}$ for $j\neq i$,
the vertex $s$ will have the status of ``already matched''.
Furthermore, since we are already assuming that $s$ has an
unmatched neighbor, we do not need to care about the vertices
$v_j$ being matched or not.

\item $F_2(r,s)=1+\sum_{i=1,\dots,k}
\min\{F_4(s,v_i),F_5(s,v_i)\}$.

We will use the edge $rs$, and then when considering the trees
$T_{sv_i}$ for $i=1,\dots,k$, the vertex $s$ will have the status
of ``already matched''. Furthermore, since $s$ may or may not have
an unmatched neighbor, we can take the minimum over $F_4$ and
$F_5$ for each of the trees $T_{sv_i}$.

\item $F_3(r,s)=1 +\sum_{i=1,\dots,k} F_4(s,v_i)$.

This case is similar to the previous one, but now the vertex $s$
cannot have unmatched neighbors, so we will just consider $F_4$
for each of the trees $T_{sv_i}$.

\item $F_4(r,s)=\min \{ \min_{i=1,\dots,k}
\{F_2(s,v_i)+\sum_{j=1,\dots,k;
 j \neq i} F_4(s,v_j)\},$ \linebreak $\min_{i=1,\dots,k} \{F_3(s,v_i)+\sum_{j=1,\dots,k;
 j \neq i} \min\{F_4(s,v_j),F_5(s,v_j)\}\}\}$

As in the first case, we need to match $s$ with some of its
children, say $v_i$. But now, since $r$ is assumed to be matched,
$s$ may or may not have an unmatched neighbor, depending on the
matching status of the vertices $v_j$ with $j\neq i$. So we will
take the minimum among allowing $v_i$ to have an unmatched
neighbor and forcing $v_j$, $j\neq i$, to be matched, or
forbidding $v_i$ to have an unmatched neighbor and allowing $v_j$,
$j\neq i$, to be either matched or not.

\item $F_5(r,s)=\sum_{i=1,\dots,k} F_1(s,v_i)$

This last case is quite clear.
\end{itemize}

In this way, in order to obtain the cardinality of a minimum
strongly maximal matching of a nontrivial tree $T$, we can root it
at a leaf $r$ whose neighbor is $s$ and compute
$\min\{F_1(r,s),F_2(r,s)\}$. By keeping some extra information, we
can also obtain in polynomial time the matching itself.
\end{proof}


\subsection{Tree-cographs}

A graph is a \emph{tree-cograph} if it can be constructed from trees by disjoint union and complement operations. Tree-cographs have been introduced by Tinhofer \cite{Tinhofer89} as a generalization of trees and cographs.

Let $G_1 = (V_1,E_1)$ and $G_2 = (V_2,E_2)$ be two graphs with
$V_1 \cap V_2 = \emptyset$. The \emph{union} of $G_1$ and $G_2$ is
the graph $G_1 \cup G_2 = (V_1 \cup V_2, E_1 \cup E_2)$, and the
\textit{join} of $G_1$ and $G_2$ is the graph $G_1 \vee G_2 = (V_1
\cup V_2, E_1 \cup E_2 \cup V_1 \times V_2)$. Note that
$\overline{G_1\vee G_2} = \overline{G_1} \cup \overline{G_2}$.

 Tree-cographs can be recursively defined as follows: a graph $G$ is a tree-cograph if and only if
\begin{itemize}
\item[(i)] $G$ is a tree or a co-tree, or
\item[(ii)] $G$ is the union of two tree-cographs $G_1$ and $G_2$, or
\item[(iii)] $G$ is the join of two tree-cographs $G_1$ and $G_2$.
\end{itemize}
Notice that if $(i)$ in the above definition is replaced by ``$G$ is a single vertex" then, the obtained graph is a cograph.

The notion of \emph{dominance sequence} has been introduced in
\cite{Bonomo-et-al09} in order to compute the b-chromatic number
of $P_4$-sparse graphs and, in particular, cographs. Formally,
given a graph $G$, the dominance sequence $\mbox{dom}_G \in
\mathbb{Z}^{\mathbb{N}_\geq \chi(G)}$, is defined such that
$\mbox{dom}_G[t]$ is the maximum number of distinct color classes
admitting dominant vertices in any coloring of $G$ with $t$
colors, for every $t \geq \chi(G)$. Note that it suffices to
consider this sequence until $t = |V(G)|$, since $\mbox{dom}_G[t]
= 0$ for $t > |V(G)|$. Therefore, in the sequel we shall consider
only the \emph{dominance vector}
$(\mbox{dom}_G[\chi(G)],\ldots,\mbox{dom}_G[|V(G)|])$. Notice that
a graph $G$ admits a b-coloring with $t$ colors if and only if
$\mbox{dom}_G[t] = t$. Moreover, it is clear that
$\mbox{dom}_G[\chi(G)] = \chi(G)$.

The following results given in \cite{Bonomo-et-al09} are very
important in order to compute the b-chromatic number of graphs
that can be decomposed recursively in modules via disjoint union
or join operations.

\begin{theorem}[\cite{Bonomo-et-al09}]
\label{teo:Xb-union}
Let $G_1 = (V_1,E_1)$ and $G_2 = (V_2,E_2)$ be two graphs such
that $V_1 \cap V_2 = \emptyset$. If $G = G_1 \cup G_2$ and $t \geq
\chi(G)$, then
$$\dom_G[t]=\min\{t,\dom_{G_1}[t]+\dom_{G_2}[t]\}.$$
\end{theorem}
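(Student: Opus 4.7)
The plan is to prove the two inequalities $\dom_G[t] \le \min\{t, \dom_{G_1}[t]+\dom_{G_2}[t]\}$ and $\dom_G[t] \ge \min\{t, \dom_{G_1}[t]+\dom_{G_2}[t]\}$ separately, exploiting the fact that in a disjoint union $G = G_1 \cup G_2$ every vertex of $V(G_i)$ has its neighborhood contained in $V(G_i)$.

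For the upper bound, consider any $t$-coloring of $G$ and let $D_i$ be the set of colors whose class contains a dominating vertex belonging to $V(G_i)$. The key observation is that a vertex $v \in V(G_i)$ can be dominating only if every one of the $t$ colors actually appears in $V(G_i)$, because $v$ has to be adjacent to a vertex of every other class and $N(v) \subseteq V(G_i)$. Hence, whenever $D_i \ne \emptyset$, the restriction of the coloring to $G_i$ is a genuine $t$-coloring in which each color of $D_i$ is a dominating class, so $|D_i| \le \dom_{G_i}[t]$ (and trivially $|D_i| = 0$ otherwise). Since the number of dominating classes of the $t$-coloring of $G$ is $|D_1 \cup D_2|$, and this is bounded both by $t$ and by $|D_1|+|D_2|$, the upper bound follows.

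For the lower bound, set $a = \dom_{G_1}[t]$ and $b = \dom_{G_2}[t]$. When $a,b \ge 1$, pick $t$-colorings $c_1, c_2$ of $G_1, G_2$ realizing $a$ and $b$ dominating classes respectively. By independently permuting color labels in each $G_i$, we may assume the dominating color sets $D_1, D_2 \subseteq [t]$ (of sizes $a$ and $b$) satisfy $|D_1 \cup D_2| = \min\{t,a+b\}$: make them disjoint if $a+b \le t$, otherwise align them so $D_1 \cup D_2 = [t]$. The disjoint union of $c_1$ and $c_2$ is then a $t$-coloring of $G$ with precisely $\min\{t,a+b\}$ dominating classes. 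If $a = 0$ (the case $b = 0$ is symmetric, and the case $a = b = 0$ is trivial), then $\min\{t,a+b\} = b$; when $b \ge 1$ we take a $t$-coloring of $G_2$ with $b$ dominating classes and extend it to $G_1$ by any proper coloring using colors from $[t]$, which is possible since $\chi(G_1) \le \chi(G) \le t$. The resulting $t$-coloring of $G$ inherits the $b$ dominating classes coming from $G_2$.

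The main difficulty is the boundary bookkeeping: one of $G_1$, $G_2$ may have fewer than $t$ vertices and admit no $t$-coloring on its own, forcing $\dom_{G_i}[t] = 0$; the argument must still produce a $t$-coloring of $G$ (padding colors into the other side is forced in this situation) and must match the formula $\min\{t, \dom_{G_1}[t] + \dom_{G_2}[t]\}$. The observation that $\dom_{G_i}[t] = 0$ exactly when no dominating vertex can lie in $V(G_i)$ is what makes the two bounds meet, and is the reason the union formula takes this clean form.
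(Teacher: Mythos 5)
The paper imports this theorem from \cite{Bonomo-et-al09} without reproducing a proof, so there is no in-paper argument to compare against; judged on its own, your proof is correct and is the standard argument for this fact. The two points that need care --- that a dominating vertex $v\in V(G_i)$ forces all $t$ colors to appear on $V(G_i)$, so that the restriction is a genuine $t$-coloring of $G_i$ and $|D_i|\le\dom_{G_i}[t]$, and the degenerate cases where $\dom_{G_i}[t]=0$ (e.g.\ when $t>|V(G_i)|$) --- are both handled correctly in your write-up.
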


\begin{theorem}[\cite{Bonomo-et-al09}]
\label{teo:Xb-join}
Let $G_1 = (V_1,E_1)$ and $G_2 = (V_2,E_2)$ be two graphs such
that $V_1 \cap V_2 = \emptyset$. Let $G = G_1 \vee G_2$ and
$\chi(G) \leq t \leq |V(G)|$. Let $a = \max \{\chi(G_1),t -
|V(G_2)|\}$ and $b = \min \{|V(G_1)|,t - \chi(G_2)\}$. Then $a
\leq b$ and $$\dom_G[t]=\max_{a \leq j \leq
b}\{\dom_{G_1}[j]+\dom_{G_2}[t-j]\}.$$
\end{theorem}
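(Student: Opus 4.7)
The key structural observation is that in any proper coloring of $G = G_1 \vee G_2$, no color class can contain vertices from both $V_1$ and $V_2$, since every $V_1$--$V_2$ edge is present. Hence a proper $t$-coloring of $G$ decomposes uniquely into a proper $j$-coloring of $G_1$ together with a proper $(t-j)$-coloring of $G_2$, where $j$ is the number of color classes lying in $V_1$. Conversely, any pair consisting of a proper $j$-coloring of $G_1$ and a proper $(t-j)$-coloring of $G_2$, using disjoint palettes, reassembles into a proper $t$-coloring of $G$. The feasibility constraints $\chi(G_1)\le j\le |V(G_1)|$ and $\chi(G_2)\le t-j\le |V(G_2)|$ translate exactly to $a\le j\le b$.

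The inequality $a\le b$ will be handled first as a sanity check. The four pairwise comparisons reduce to: the trivial $\chi(G_i)\le |V(G_i)|$ for $i=1,2$; the inequality $\chi(G_1)\le t-\chi(G_2)$, which is the assumption $t\ge \chi(G_1)+\chi(G_2)=\chi(G)$ since chromatic number is additive across a join; and $t-|V(G_2)|\le |V(G_1)|$, which is the assumption $t\le |V(G)|$.

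The central step is the dominance correspondence. Given a proper $t$-coloring of $G$ split into a $j$-coloring of $G_1$ and a $(t-j)$-coloring of $G_2$, a vertex $v\in V_1$ is adjacent in $G$ to every vertex of $V_2$, hence automatically meets all $t-j$ color classes lying in $V_2$; so $v$ is dominating in $G$ if and only if it is dominating in the induced coloring of $G_1$. The symmetric statement holds for $V_2$. Summing over both sides, the number of dominating color classes in $G$ equals the sum of the corresponding numbers for the two restricted colorings. Maximising independently gives $\dom_{G_1}[j]+\dom_{G_2}[t-j]$, and optimising over admissible splits $j\in[a,b]$ yields the claimed formula; the reassembly argument above shows the maximum is achieved.

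The main subtlety is precisely this two-way optimality: every $t$-coloring of $G$ must be captured by some split $j$, and optimal partial colorings must recombine without losing dominance. Both directions rely on the fact that the join structure renders the two sides entirely decoupled from the perspective of domination, which is exactly what makes the formula separate additively.
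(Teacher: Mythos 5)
Your proof is correct. Note, however, that the paper does not prove this statement at all: Theorem~\ref{teo:Xb-join} is imported verbatim from \cite{Bonomo-et-al09}, so there is no in-paper argument to compare against. Your reasoning --- color classes cannot straddle the join, so a surjective $t$-coloring splits into a $j$-coloring of $G_1$ and a $(t-j)$-coloring of $G_2$ with $a\le j\le b$ encoding exactly the feasibility constraints; a vertex of $V_1$ sees every class in $V_2$ for free, so domination decouples and the count is additive; the two sides can be optimized independently because any pair of partial colorings recombines --- is the standard and essentially the original argument, and your verification of $a\le b$ via the four pairwise comparisons (using $\chi(G_1\vee G_2)=\chi(G_1)+\chi(G_2)$ and $|V(G)|=|V(G_1)|+|V(G_2)|$) is complete.
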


In order to compute the dominance vector of a tree-cograph and its corresponding b-chromatic number, by Theorems \ref{teo:Xb-union} and \ref{teo:Xb-join}, it is sufficient to compute the dominance vector for both trees and co-trees.

\subsubsection{Dominance vector for trees}

Irving and Manlove~\cite{I-M-b-col} have shown that the b-chromatic number of any tree $T$ is equal to $m(T)-1$ or $m(T)$, depending on the existence of a unique vertex in $T$ called a \emph{pivot}, defined as follows. 

A vertex $v$ of $T$ is called \emph{dense} if $d(v) \geq m(T)-1$.
Call $T$ \emph{pivoted} if $T$ has exactly $m(T)$ dense vertices, and contains a distinguished vertex $v$, called a \emph{pivot} of $T$, such that:
(1) $v$ is not dense,
(2) each dense vertex is adjacent either to $v$ or to a dense vertex adjacent to $v$, 
and (3) any dense vertex adjacent to $v$ and to another dense vertex has degree $m(T)-1$.

Irving and Manlove~\cite{I-M-b-col} show that a pivot is unique if it exists and that pivoted trees can be recognized in linear time. Moreover, they obtain the following result.

\begin{theorem}[\cite{I-M-b-col}]
\label{theo-trees1}
Let $T$ be a tree. If $T$ is pivoted then $\chi_b(T) = m(T) - 1$, otherwise $\chi_b(T) = m(T)$. In both cases, a b-coloring of $T$ with $\chi_b(T)$ colors can be obtained in linear time.
\end{theorem}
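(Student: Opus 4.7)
The upper bound $\chi_b(T) \leq m(T)$ is immediate from Proposition~1, so the task reduces to establishing the dichotomy. A crucial preliminary observation is that any b-coloring of $T$ with $k = m(T)$ colors needs $m(T)$ dominating vertices, each of degree at least $m(T)-1$, hence each dense. Therefore, whenever $T$ has exactly $m(T)$ dense vertices, the dominating vertices of any hypothetical b-coloring with $m(T)$ colors must be precisely those dense vertices.

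For the pivoted case, I would argue by contradiction. Suppose $T$ is pivoted with pivot $v$ and admits a b-coloring $c$ with $m := m(T)$ colors. By the preliminary observation, each dense vertex dominates its own color class. Since $v$ is not dense (condition~(1)), the color $c(v)$ is dominated by some dense vertex $w$. By condition~(2), every dense vertex lies within the closed second-neighborhood of $v$ along dense neighbors; and by condition~(3), any dense vertex $u$ adjacent to $v$ and to another dense vertex has degree exactly $m-1$. I would then trace the forced colors in the neighborhood of such a tight bridge vertex $u$: its $m-1$ neighbors must realize every color except $c(u)$, and among these neighbors one is $v$ and at least one is a dense vertex $u'$ with yet another color class; this leaves only $m-3$ further neighbors to realize $m-3$ remaining colors. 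Combining this rigidity across several bridge vertices around $v$ should expose a scheduling conflict: either two distinct dominators need the same private neighbor as their witness of a shared color, or some dominator fails to see one of the colors altogether.

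For the non-pivoted case I would constructively build a b-coloring with $m$ colors. The set of dense vertices has size at least $m$ by definition of $m(T)$. If it has size strictly greater than $m$, I would choose the $m$ dominators flexibly, discarding a dense vertex in such a way that the chosen set violates the configuration required of a pivot's neighborhood. If it has size exactly $m$, then by hypothesis at least one of conditions~(1)–(3) fails for every candidate vertex, which either produces a dense vertex outside the second-neighborhood of every non-dense vertex (giving us room to color independent parts of $T$) or a bridge dense vertex of degree strictly larger than $m-1$ (giving slack neighbors available to realize the required colors). After fixing the dominators and their distinct colors, I would extend to a proper coloring by rooting $T$ appropriately and propagating colors greedily outward; because $T$ is acyclic, any local conflict can be resolved by recoloring in the remaining subtree, so the extension always succeeds.

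The main obstacle is the case analysis in the pivoted direction, where one must rule out \emph{every} attempted b-coloring with $m(T)$ colors. This calls for careful accounting of how colors propagate between $v$, the dense vertices adjacent to $v$, and the dense vertices at distance two, with condition~(3) playing a delicate role in the counting. In the same breath, one should verify uniqueness of the pivot, which underlies the cleanness of the dichotomy; this can likely be extracted from the contradiction argument by showing that any two pivot candidates would force incompatible structural constraints. The linear-time assertion then follows by converting the non-pivoted existence proof into an explicit BFS/DFS over $T$, combined with the cited linear-time recognition of pivoted trees.
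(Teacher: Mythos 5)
This theorem is not proved in the paper at all: it is quoted verbatim from Irving and Manlove \cite{I-M-b-col}, so there is no in-paper argument to compare against. Judged on its own terms, your sketch correctly identifies the overall architecture (the upper bound $\chi_b(T)\leq m(T)$ from Proposition~1, the observation that an $m(T)$-b-coloring forces the dominating vertices to be exactly the dense vertices when there are only $m(T)$ of them, an impossibility argument in the pivoted case, and a construction otherwise), but it contains a genuine logical gap and two unexecuted steps.

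The logical gap: in the pivoted case you only argue (and only in outline) that no b-coloring with $m(T)$ colors exists, i.e.\ $\chi_b(T)\leq m(T)-1$. The theorem asserts equality, so you must also \emph{construct} a b-coloring of a pivoted tree with $m(T)-1$ colors; nothing in general guarantees $\chi_b(T)\geq m(T)-1$, and this lower bound is a substantive part of Irving and Manlove's proof. Your proposal never addresses it. Beyond that, the two central arguments are left as intentions rather than proofs: ``should expose a scheduling conflict'' and ``any local conflict can be resolved by recoloring in the remaining subtree'' are precisely the places where the work lies. In particular, the non-pivoted construction must show that the $m(T)$ chosen dominators can each see all other $m(T)-1$ colors on their neighbors simultaneously, and the only obstruction to doing this greedily is exactly a pivot-like configuration --- so the construction has to engage with the near-pivot cases explicitly rather than appeal to acyclicity. (The uniqueness of the pivot, which you propose to verify, is stated separately in the paper and is not needed for the dichotomy itself.) As written, the proposal is a plausible road map for reproving the cited theorem, not a proof.
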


It is known that chordal graphs are b-continuous \cite{K-K-V-b-col, Faik-tesis} and thus trees are b-continuous as well. 
Therefore, we may derive the following result concerning the dominance vector for trees.

\begin{lemma}
\label{dom-vec-trees-l1}
Let $T$ be a tree with maximum degree $\Delta$. Then, $\mbox{dom}_T[i] = i$, for $2 \leq i \leq \chi_b(T)$. Moreover, $\mbox{dom}_T[i] = 0$ for any $i > \Delta + 1$.
\end{lemma}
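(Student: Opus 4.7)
The plan is to prove the two assertions separately; both are rather direct given the tools already cited.

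For the first assertion, $\dom_T[i]=i$ when $2\le i\le \chi_b(T)$, I would rely on the b-continuity of trees. A tree is chordal, and chordal graphs are b-continuous by the results cited from \cite{K-K-V-b-col, Faik-tesis} just before the lemma. If $T$ has at least one edge then $\chi(T)=2$, so b-continuity gives a b-coloring of $T$ with exactly $i$ colors for every $i\in\{2,\dots,\chi_b(T)\}$. In such a b-coloring all $i$ color classes contain a dominant vertex, so $\dom_T[i]\ge i$; combined with the trivial bound $\dom_T[i]\le i$ this yields equality. (If $T$ has no edge then the stated range is empty and the claim is vacuous.)

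For the second assertion, $\dom_T[i]=0$ when $i>\Delta+1$, I would use an elementary degree argument: in any proper coloring with $i$ colors, a dominant vertex of some class must have at least one neighbor in each of the other $i-1$ classes, hence degree at least $i-1$. If $i>\Delta+1$ then $i-1>\Delta$, so no vertex of $T$ can serve as a dominant vertex in any $i$-coloring, and therefore $\dom_T[i]=0$ (by the convention that the value is $0$ if no proper $i$-coloring exists at all, e.g.\ when $i>|V(T)|$).

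Neither step presents a real obstacle: the first assertion reduces to the already-known b-continuity of chordal graphs, and the second is an immediate consequence of the fact that a dominant vertex in a $t$-coloring has degree at least $t-1$.
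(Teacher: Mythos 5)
Your proof is correct and follows the same route the paper intends: the paper states this lemma without a separate proof, deriving the first part directly from the b-continuity of trees (as chordal graphs, citing \cite{K-K-V-b-col,Faik-tesis}) together with the trivial bound $\dom_T[i]\le i$, and the second part from the observation that a dominant vertex in an $i$-coloring must have degree at least $i-1>\Delta$. Your explicit handling of the edgeless case and of $i>|V(T)|$ is a harmless extra precaution.
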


Moreover, it is not difficult to obtain a b-coloring of a tree $T$ with $i$ colors from one with $i+1$ colors in polynomial time, for $2 \leq i < \chi_b(T)$.

\begin{lemma}
\label{dom-vec-trees-l2}
Let $T$ be a pivoted tree. Then $\mbox{dom}_T[m(T)] = m(T) -1$.
\end{lemma}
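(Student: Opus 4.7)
The plan is to prove both inequalities $\dom_T[m(T)] \le m(T)-1$ and $\dom_T[m(T)] \ge m(T)-1$. The upper bound is immediate from Theorem~\ref{theo-trees1}: since $T$ is pivoted, $\chi_b(T) = m(T)-1$, so $T$ admits no b-coloring with $m(T)$ colors, and therefore no proper $m(T)$-coloring of $T$ has all of its color classes dominated. This yields $\dom_T[m(T)] \le m(T)-1$.

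For the lower bound I plan to exhibit a proper $m(T)$-coloring with $m(T)-1$ dominating color classes. The construction follows the standard attempt to b-color $T$ with $m(T)$ colors. Let $w_1,\dots,w_{m(T)}$ be the dense vertices and $v$ the pivot. First I would assign the $m(T)$ dense vertices the $m(T)$ distinct colors (this is a proper partial coloring, since the dense vertices induce a forest in $T$), and then extend to the rest of $T$, trying to place, for each $w_i$, the $m(T)-1$ other colors in the neighborhood of $w_i$. Because $T$ is a tree, once the colors in the neighborhoods of the $w_i$'s are fixed the remaining extension only amounts to a $2$-coloring of pending subtrees, so the only binding constraints are local to the dense vertices and their neighbors.

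The main obstacle, and the heart of the argument, is to show that this attempt fails at exactly one dense vertex rather than cascading. This is precisely where the pivoted structure intervenes: conditions~(2) and~(3) in the definition of pivoted say that every dense vertex lies in the 2-neighborhood of $v$ through a dense vertex, and that any dense vertex adjacent to both $v$ and to another dense vertex has degree exactly $m(T)-1$, leaving no ``spare'' slot in its neighborhood. I plan to pick such a tight-degree vertex $w^*$ as the sacrificed vertex and to route the required colors through $v$ and the other dense vertices so that each remaining $w_i$ still sees all $m(T)-1$ other colors; the degree-tightness of $w^*$ forces the conflict to be absorbed by $w^*$ alone. Alternatively, this ``almost b-coloring'' can be extracted directly from the failure case in the Irving--Manlove proof of Theorem~\ref{theo-trees1}, which analyzes precisely why no color class beyond the sacrificed one can fail to have a dominating vertex.
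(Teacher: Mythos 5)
Your upper bound is exactly the paper's and is complete: since $T$ is pivoted, Theorem~\ref{theo-trees1} gives $\chi_b(T)=m(T)-1$, so no proper $m(T)$-coloring has all classes dominated. For the lower bound your overall strategy also coincides with the paper's (exhibit a proper $m(T)$-coloring in which all but one dense vertex is dominant, using conditions (2) and (3) of the pivoted structure to confine the deficiency to a single dense vertex). But you stop precisely at what you yourself call the heart of the argument: you never specify a coloring, only the intention to ``route the required colors'' so that the conflict is absorbed by one tight-degree vertex $w^*$. As written, ``the failure does not cascade'' is an assertion, not a proof, and nothing in your sketch explains \emph{what} forces $w^*$, rather than some other dense vertex, to be the one that fails.

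The concrete device the paper uses, and which your sketch is missing, is a deliberate repetition of a color inside one dense vertex's neighborhood. Since $v$ is not dense, $d(v)\le m(T)-2$, so at least one dense vertex $w$ is not adjacent to $v$; by condition (2) and uniqueness of paths in a tree, $w$ is at distance $2$ from $v$, and the common neighbor $u$ of $v$ and $w$ is a dense vertex adjacent to $v$ and to the dense vertex $w$, hence of degree exactly $m(T)-1$ by condition (3). Give $v$ and $w$ the \emph{same} color $1$, give the $m(T)$ dense vertices the $m(T)$ distinct colors (with $w$ receiving color $1$), and complete around each dense vertex as in Irving and Manlove's argument. The repeated color $1$ on the two neighbors $v,w$ of $u$ wastes one of $u$'s $m(T)-1$ neighborhood slots, so $u$ cannot see all other colors and its class is the unique one without a dominant vertex, while every other dense vertex has degree at least $m(T)-1$ and (since two vertices of a tree share at most one neighbor) can be surrounded by all the remaining colors and the coloring extended properly. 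If you add this construction --- or, as you suggest, extract the equivalent ``almost b-coloring'' from the failure case of the Irving--Manlove proof --- your argument becomes the paper's proof.
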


\begin{proof}
By Theorem \ref{theo-trees1}, if $T$ is a pivoted tree then $\chi_b(T) = m(T) - 1$, so $\mbox{dom}_T[m(T)] \leq m(T) -1$. Consider now the following coloring of $T$ with $m(T)$ colors. Give color $1$ to the pivot $v$ of $T$. Since $v$ is not dense, there are at least two dense vertices at distance $2$ of $v$; give color $1$ to one of them, say $w$.
Now color the dense vertices using the $m(T)$ different colors and color their neighbors in such a way that the only dense vertex that is not dominant is the common neighbor to $v$ and $w$. It is easy to extend this coloring to a proper coloring of $T$ with $m(T)$ colors.
\end{proof}

We now show how to compute the values $\mbox{dom}_T[i]$ and a coloring of $T$ with $i$ colors and $\mbox{dom}_T[i]$ dominant vertices in linear time, for $m(T) < i \leq \Delta +1$. 
For this, we need the following definition.
Let $T$ be a tree of maximum degree $\Delta$ and let $i$ be an integer such that $m(T) < i \leq \Delta + 1$. 
We define $m_i(T)$ as the number of vertices in~$T$ of degree at least $i-1$.

It is not difficult to see that for a tree $T$, $\mbox{dom}_T[i] \leq m_i(T) < i$, for values of $i$ with $m(T) < i \leq \Delta + 1$.

\begin{lemma}
Let $T$ be a tree of maximum degree $\Delta$ and let $i$ be an integer with $m(T) < i \leq \Delta + 1$. Then, $\mbox{dom}_T[i] = m_i(T)$, and a coloring of $T$ with~$i$ colors and $m_i(T)$ dominant vertices can be computed in linear time.
\end{lemma}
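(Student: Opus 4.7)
The upper bound $\mathrm{dom}_T[i]\le m_i(T)$ is already noted before the statement: every dominant class in a proper $i$-coloring must host a vertex of degree at least $i-1$, and distinct dominant classes are witnessed by distinct such vertices; hence at most $|D|=m_i(T)$ color classes can be dominant, where $D:=\{v\in V(T):d_T(v)\ge i-1\}$.

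For the matching lower bound, the plan is to construct a proper $i$-coloring of $T$ in which every vertex of $D$ is dominant. Because $|D|=m_i(T)<i$, one may assign pairwise distinct colors $c(v)\in\{1,\ldots,m_i(T)\}$ to the vertices of $D$. Writing $M(v):=\{1,\ldots,i\}\setminus(\{c(v)\}\cup\{c(u):u\in N(v)\cap D\})$ for the colors still missing from $v$'s closed neighborhood, a simple count yields $|M(v)|=i-1-|N(v)\cap D|\le d_T(v)-|N(v)\cap D|$, so $v$ has at least as many non-dense neighbors as colors to realize in $M(v)$.

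The coloring is completed by rooting $T$ at a non-dense vertex---one exists since $|V(T)|\ge\Delta+1\ge i>m_i(T)$---and processing $T$ top-down. When a dense vertex $v$ is reached, its only already-coloured neighbor is its parent, and the degree inequality above leaves enough room to place every colour of $M(v)$ not already supplied by the parent on a distinct non-dense child of $v$. The remaining non-dense vertices are coloured in a final greedy sweep, each receiving any colour different from the colour of their unique already-coloured neighbor (their parent). Since each vertex is visited a constant number of times, the procedure runs in $O(|V(T)|)$ time.

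The main technical obstacle is ruling out a clash at a non-dense vertex $w$ that is adjacent to several dense vertices with incompatible demands for $c(w)$: in a tree this can happen when $w$ is the parent of a dense vertex $v'$ and is itself already constrained by an earlier dense ancestor. Processing dense vertices top-down localizes this issue, since when $w$ is coloured from above one may prefer $c(w)\in M(v')$ whenever possible, avoiding wasting $v'$'s limited pool of non-dense children; and since the total number of colours forbidden for $w$ is at most $|N(w)\cap D|+1\le m_i(T)+1\le i$, at least one admissible choice always remains. The degree slack $d_T(v')\ge i-1$ then accommodates a single pre-committed parent colour whenever an unfavourable choice is unavoidable, so the greedy completion never fails.
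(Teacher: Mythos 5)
Your upper bound $\mathrm{dom}_T[i]\le m_i(T)$ is fine (and is the inequality the paper records just before the lemma). The gap is in the lower bound, exactly at the point you label ``the main technical obstacle''. The whole difficulty of the statement is to show that a non-dense vertex $w$ with several pre-committed constraints can always be coloured so that \emph{every} dense neighbour of $w$ still ends up dominant, and your two closing sentences do not establish this. First, the bound ``the total number of colours forbidden for $w$ is at most $m_i(T)+1\le i$'' only guarantees a \emph{proper} choice for $c(w)$ (in fact a non-dense $w$ has degree at most $i-2$, so properness is never in danger); it says nothing about choosing $c(w)$ inside $\bigcap_{v'}M(v')$ over all dense children $v'$ of $w$, which is what dominance requires, and that intersection can be a single colour (e.g.\ for $i=5$, $m_i(T)=4$, with $w$ adjacent to two tight dense vertices $v_1,v_2$ each having one further dense neighbour, one gets $M(v_1)\cap M(v_2)=\{5\}$), a colour that an earlier, unconstrained greedy step may already have placed on $w$'s parent. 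Second, the claim that ``the degree slack $d_T(v')\ge i-1$ accommodates a single pre-committed parent colour whenever an unfavourable choice is unavoidable'' is vacuous precisely in the tight case $d_T(v')=i-1$: there all $i-1$ neighbours of $v'$ must receive pairwise distinct colours, so a single wasted neighbour already destroys dominance of $v'$. Since you neither show that unfavourable choices can always be avoided for tight dense vertices nor supply a repair when they cannot, the induction/greedy does not go through as written. (This is not a small omission: the existence of pivoted trees shows that exactly this kind of clash can be unavoidable when the number of dense vertices equals the number of colours; your proof needs to exploit the hypothesis $m_i(T)<i$ somewhere, and it currently does not.)

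For comparison, the paper avoids this case analysis entirely: it attaches to a leaf of $T$ a path $x,y,v_1,\dots,v_{i-k},z$ with $i-3$ pendant leaves on each $v_j$, obtaining a tree $T''$ with exactly $i$ dense vertices ($k=m_i(T)$ of them in $T$, the rest being the $v_j$) which is provably not pivoted because $x$ and $y$ both have degree $2$. Theorem~\ref{theo-trees1} then yields a b-coloring of $T''$ with $i$ colours in linear time, whose dominant vertices are exactly the $i$ dense vertices; deleting the gadget leaves the desired colouring of $T$ with $m_i(T)$ dominant classes. If you want to keep a direct construction, you must either prove a lemma that the colours of the dense vertices and of their non-dense cut vertices can be chosen consistently (using the spare colour guaranteed by $m_i(T)<i$), or reduce to the Irving--Manlove theorem as the paper does.
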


\begin{proof}
For convenience, set $k = m_i(T)$. As $i\leq\Delta +1$, we have $k>0$. Let $P$ be a path disjoint from $T$ that contains the $i-k+3 \geq 4$ vertices $x,y,v_1,v_2,\ldots,v_{i-k},z$ in this order. We construct a tree~$T'$ disjoint from $T$ by taking $P$ and pending $i-3$ leaves from each vertex $v_j$, with $1 \leq j \leq i-k$. Obtain $T''$ from $T$ and $T'$ by adding an edge between $x$ and some leaf $h$ of~$T$.

By construction, $m_i(T'') = i$, and thus also $m(T'')=i$. Further,~$T'$ contains the dense vertices $v_1,v_2,\ldots,v_{i-k}$ of $T''$, and~$T$ contains $k$ dense vertices of $T''$. So, as both $x$ and $y$ have degree~$2$ in~$T''$ (and thus either both or none of them are dense), we see that $T''$ is not pivoted. Hence,  Theorem \ref{theo-trees1} yields that $\chi_b(T'') = m(T'') = i$, and a b-coloring of $T''$ with $i$ colors can be computed in linear time.

Now, the dominant vertices in $T''$ are exactly the $k$ vertices of degree at least $i-1$ in $T$ and the $i-k$ vertices $v_1,\ldots,v_{i-k}$ in $T'$. Therefore, by removing the tree $T'$ from $T''$ we obtain the desired coloring of $T$ with $i$ colors and exactly $k$ dominant vertices. Moreover, notice that the distance in $T''$ between a dense vertex in $T$ and a dense vertex in $T'$ is at least equal to $4$. Hence, by using Irving's and Manlove's algorithm \cite{I-M-b-col} for b-coloring $T''$ with $i$ colors, we can forget the tree $T'$ and thus, the coloring of $T$ with $i$ colors and $m_i(T)$ dominant vertices can be done in $O(|V(T)|)$ time.
\end{proof}

\subsubsection{Dominance vector for co-trees}

Let $G$ be a graph and $M$ be a matching of it. Let $S_1(G,M)$ be
the number of unmatched vertices that have at least an unmatched
neighbor and $S_2(G,M)$ be the number of edges of $M$ that are the
center of an augmenting path of length $3$ for $M$. Now, let
$F(\overline{G},k)$ be the minimum of $S_1(G,M)+S_2(G,M)$ over all
the matchings $M$ of $G$ with $|M|=k$.

Now, let $G$ be a graph with stability at most two and consider a
coloring of it. Let $M$ be the matching of $\overline{G}$
corresponding to that coloring. The number of color classes
without a dominant vertex are exactly
$S_1(\overline{G},M)+S_2(\overline{G},M)$. So, for $\chi(G) \leq i
\leq |V(G)|$, $\mbox{dom}_{G}[i] = i-F(\overline{G},|V(G)|-i)$. We
will show how to compute $F(T,k)$ for a tree $T$ and a nonnegative
integer $k$ in polynomial time.

\begin{theorem}\label{thm:algo-dom-cotrees}
If $G$ is a co-tree, $\mbox{dom}_G$ can be computed in polynomial time.
\end{theorem}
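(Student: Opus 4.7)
The plan is to use the identity stated just before the theorem, $\dom_G[i]=i-F(\overline{G},|V(G)|-i)$ for every $i$ with $\chi(G)\le i\le|V(G)|$. Since $\overline{G}$ is a tree~$T$, it suffices to exhibit a polynomial-time algorithm that, given $T$, outputs $F(T,k)$ for every $k$ with $0\le k\le \lfloor|V(T)|/2\rfloor$; the dominance vector of $G$ is then read off by a single subtraction.

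I would compute $F(T,\cdot)$ by dynamic programming on $T$ rooted at an arbitrary vertex, in the spirit of Theorem~\ref{thm:algo-Xb-cotrees}, but now enriched with an extra parameter~$k$ counting the number of matching edges used in the subtree. For each vertex $v$ and each value of $k$, I would store a table indexed by a constant number of states describing how $v$ interacts with the rest of the tree: whether $v$ is matched inside $T_v$; if unmatched, whether $v$ already has an unmatched neighbour in $T_v$; if matched to a child $u$, whether $u$ has another unmatched neighbour in $T_u$, and whether $v$ has another unmatched neighbour in $T_v\setminus T_u$. These states suffice to encode, once $v$'s parent is processed, every additional contribution to $S_1$ or $S_2$ arising from the interaction of $v$ with the rest of $T$. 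The table entry stores the minimum, over matchings $M$ of $T_v$ of size $k$ in the prescribed state, of the part of $S_1(T,M)+S_2(T,M)$ that is already determined by $T_v$ alone.

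The transitions combine the children of $v$ by a knapsack-style convolution over $k$: at most one child is chosen to be matched with $v$, the other children are taken in states compatible with $v$'s own state, and the penalties that become fully determined at $v$ (those depending only on the now-fixed neighbourhood of $v$ in $T_v$) are added to the running total. Propagating the tables from the leaves to the root and taking the minimum over the admissible end-states yields $F(T,k)$ for every $k$. With a constant number of states per vertex, the standard amortization argument for tree-knapsack DPs bounds the total running time by $O(n^2)$, where $n=|V(T)|$.

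The main obstacle is the correct bookkeeping of $S_2$: whether a matching edge $uw$ is the centre of a length-$3$ augmenting path depends on the existence of unmatched neighbours at \emph{both} of its endpoints, a condition that is not local to either side of the edge. The refined states above, which for a matched edge $vu$ with $u$ a child of $v$ separately record the unmatched-neighbour status of $u$ inside $T_u$ and of $v$ inside $T_v\setminus T_u$, are chosen precisely so that this joint condition can be evaluated at $v$'s parent (or finally at the root) without enlarging the state space, and so that each matching edge is charged to $S_2$ at most once in the course of the dynamic program.
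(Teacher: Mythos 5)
Your overall strategy coincides with the paper's: reduce to computing $F(T,k)$ on the tree $T=\overline{G}$ via the identity $\dom_G[i]=i-F(\overline{G},|V(G)|-i)$, and compute $F(T,\cdot)$ by a tree dynamic program with constantly many states per vertex combined with a knapsack-style convolution over $k$. The difference lies in the state design, and there your proposal has a genuine gap. You store, for each state of $v$, only the part of $S_1+S_2$ that is ``already determined by $T_v$ alone'' and claim that a constant-size descriptor of $v$'s situation inside $T_v$ lets the parent add the remaining contributions. This fails for $S_1$, not only (as you anticipated) for $S_2$. Suppose $v$ is unmatched inside $T_v$ and has $j$ unmatched children each of whose other neighbours are all matched. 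The $S_1$ contribution of these $j$ children is not determined by $T_v$: it equals $j$ if $v$ remains unmatched in the final matching, and $0$ if $v$ is later matched to its parent. Since $j$ is unbounded and is not recorded in your state (you keep only the bit ``does $v$ already have an unmatched neighbour in $T_v$''), two matchings of $T_v$ with the same size and the same state can differ in $j$, so minimising the ``determined part'' can select a matching that is strictly worse once the parent's decision is made; the recurrence then no longer computes $F(T,k)$.

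The paper avoids this by putting the resolution of the root's parent-edge into the state rather than deferring it: its seven functions $F_1,\dots,F_7$ each prescribe whether the edge to the parent is used, whether the parent is matched elsewhere, and whether the parent has an unmatched neighbour outside the subtree (with a counting convention that charges each vertex and each centre edge exactly once). Under such a prescription every vertex of the subtree and every matching edge inside it has a fully determined $S_1$/$S_2$ status, so the stored quantity really is the minimum of the true contribution, and the number $j$ never needs to be remembered. Your proposal is repairable by the same move --- replace your internal states by states conditioned on the fate of $v$ as seen from above --- but as written the dynamic program is incorrect.
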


\begin{proof}
As we noticed above, the problem is equivalent to compute
$F(\overline{G},k)$. We will do it by dynamic programming. In
order to do so, and in a similar fashion as in Theorem
\ref{thm:algo-Xb-cotrees}, we will define seven
functions $F_i(r,s,k)$, $i=1,\dots,7$, for a nontrivial tree
$T_{rs}$ rooted at a leaf $r$ with neighbor $s$ and a nonnegative integer $k$. As we will apply
them to the subtrees of a tree, we will assume that $r$ can have
neighbors outside $T_{rs}$. Nevertheless, we will count for $S_2$ just the edges of $M \cap E(T_{rs})$ and for $S_1$ the vertices of
$V(T_{rs})$, with the exception of $r$ when it is unmatched but has already an unmatched neighbor outside $T_{rs}$, in order to avoid double counting.

For $i = 1, \dots, 7$, $F_i(r,s,k)$ will be the minimum of
$S_1(T_{rs},M)+S_2(T_{rs},M)$ over all the matchings $M$ with
$|M|=k$ such that:

\begin{itemize}
\item $F_1(r,s,k)$: $r$ is unmatched and $s$ is matched by $M$
with some vertex different from $r$.

\item $F_2(r,s,k)$: $M$ uses the edge $rs$ and $r$ has no
unmatched neighbor outside $T_{rs}$.

\item $F_3(r,s,k)$: $M$ uses the edge $rs$ and $r$ has an
unmatched neighbor outside $T_{rs}$.

\item $F_4(r,s,k)$: the vertex $s$ is matched by $M$ with some
vertex different from $r$ and the vertex $r$ is already matched
with a vertex outside $T_{rs}$.

\item $F_5(r,s,k)$: the vertex $s$ remains unmatched and the
vertex $r$ is already matched with a vertex outside $T_{rs}$.

\item $F_6(r,s,k)$: $r$ is unmatched, $s$ remains unmatched, and
$r$ has no unmatched neighbor outside $T_{rs}$.

\item $F_7(r,s,k)$: $r$ is unmatched, $s$ remains unmatched, and
$r$ has an unmatched neighbor outside $T_{rs}$ (we will not count
$r$ for $S_1$ as we assume it is already counted).
\end{itemize}

In any case, the value will be $\infty$ if no such $M$ does exist.

With these definitions, for the base case in which
$V(T_{rs})=\{r,s\}$, we have

\begin{itemize}
\item $F_1(r,s,k)=\infty$ ($s$ has no further neighbors)
\item For $i=2,3$, $F_i(r,s,1)=0$ (we define $M = \{rs\}$), $F_i(r,s,k)=\infty$ for $k \neq 1$.
\item $F_4(r,s,k)=\infty$ (it is not feasible because $s$ has no further neighbors)
\item $F_5(r,s,0)=0$, $F_5(r,s,k)=\infty$ for $k \neq 0$.
\item $F_6(r,s,0)=2$, $F_6(r,s,k)=\infty$ for $k \neq 0$.
\item $F_7(r,s,0)=1$, $F_7(r,s,k)=\infty$ for $k \neq 0$.
\end{itemize}

For the case in which $s$ has children $v_1, \dots, v_{\ell}$, we have

\begin{itemize}
\item $F_1(r,s,0)=\infty$, $F_1(r,s,k)=\min_{k_1+\dots+k_{\ell}=k}
\min_{i=1,\dots,{\ell}}$ $\{F_3(s,v_i,k_i)$ \linebreak
$+\sum_{j=1,\dots,{\ell};
 j \neq i} \min\{F_4(s,v_j,k_j),F_5(s,v_j,k_j)\}\}$ for $k>0$.

We need to match $s$ with some of its children, say $v_i$. Since
$r$ will be unmatched, $s$ will have an unmatched neighbor outside
$T_{sv_j}$. When considering the trees $T_{sv_j}$ for $j\neq i$,
the vertex $s$ will have the status of ``already matched''.
Furthermore, since we are already assuming that $s$ has an
unmatched neighbor, we do not need to distinguish about the
vertices $v_j$ being matched or not.

\item $F_2(r,s,0)=\infty$,
$F_2(r,s,k)=\min_{k_1+\dots+k_{\ell}=k-1} \sum_{i=1,\dots,{\ell}}
\min\{F_4(s,v_i,k_i),$ $F_5(s,v_i,k_i)\}$ for $k > 0$.

We will use the edge $rs$, and then when considering the trees
$T_{sv_i}$ for $i=1,\dots,{\ell}$, the vertex $s$ will have the
status of ``already matched'' and we will use $k-1$ edges in total
(thus for $k=0$ it is not feasible). Furthermore, since $r$ has no
unmatched neighbor, we can take the minimum over $F_4$ and $F_5$
for each of the trees $T_{sv_i}$ and in no case the edge $rs$ will
be the center of an augmenting path of length $3$.

\item $F_3(r,s,0)=\infty$,
$F_3(r,s,k)=\min_{k_1+\dots+k_{\ell}=k-1}
\min\{\sum_{i=1,\dots,{\ell}} F_4(s,v_i,k_i),$ $1+
\sum_{i=1,\dots,{\ell}} \min\{F_4(s,v_i,k_i),F_5(s,v_i,k_i)\}\}$
for $k > 0$.

This case is similar to the previous one, but now, since $r$ has an
unmatched neighbor, we distinguish between the case in which we consider $F_4$
for each of the trees $T_{sv_i}$ so that the edge $rs$ will not
be the center of an augmenting path of length $3$, and the case in which we take the minimum over $F_4$ and $F_5$
for each of the trees $T_{sv_i}$ and we allow the edge $rs$
being the center of an augmenting path of length $3$. In that case we will assume indeed that the edge $rs$
becomes the center of an augmenting path of length $3$, because otherwise the minimum will be attained by the previous case.

\item $F_4(r,s,0)=\infty$, $F_4(r,s,k)=\min_{k_1+\dots+k_{\ell}=k} \min \{ \min_{i=1,\dots,{\ell}}
\{F_2(s,v_i,k_i)+\sum_{j=1,\dots,{\ell};
 j \neq i} F_4(s,v_j,k_j)\},$ $\min_{i=1,\dots,{\ell}} \{F_3(s,v_i,k_i)+$ $\sum_{j=1,\dots,{\ell};
 j \neq i}$ $\min\{F_4(s,v_j,k_j),$ $F_5(s,v_j,k_j)\}\}\}$ for $k > 0$.

As in the first case, we need to match $s$ with some of its
children, say $v_i$. But now, since $r$ is assumed to be matched,
$s$ may or may not have an unmatched neighbor, depending on the
matching status of the vertices $v_j$ with $j\neq i$. So we will
take the minimum among allowing $v_i$ to have an unmatched
neighbor and forcing $v_j$, $j\neq i$, to be matched, or
forbidding $v_i$ to have an unmatched neighbor and allowing $v_j$,
$j\neq i$, to be either matched or not.

\item $F_5(r,s,k)=\min_{k_1+\dots+k_{\ell}=k} \min \{ \sum_{i=1,\dots,{\ell}} F_1(s,v_i,k_i), \min_{i=1,\dots,{\ell}}$
$\{F_6(s,v_i,k_i)+\sum_{j=1,\dots,{\ell};
 j \neq i} F_1(s,v_j,k_j)\},$ $1+\sum_{i=1,\dots,{\ell}} \min\{F_1(s,v_i,k_i),F_7(s,v_i,k_i)\}\}$

We will take the minimum over three cases: either all the $v_i$ will be matched, or exactly one of them will be unmatched,
or at least two of them will be unmatched. In this last case we count $s$ as an unmatched vertex with unmatched neighbor but
we do not force explicitly two of the $v_i$ to be unmatched, because otherwise
the minimum will be attained by one of the previous cases.

\item $F_6(r,s,k)=2+\min_{k_1+\dots+k_{\ell}=k} \sum_{i=1,\dots,{\ell}} \min\{F_1(s,v_i,k_i),F_7(s,v_i,k_i)\}$

We are counting $r$ and $s$ as unmatched vertices with an unmatched neighbor.

\item $F_7(r,s,k)=1+\min_{k_1+\dots+k_{\ell}=k} \sum_{i=1,\dots,{\ell}} \min\{F_1(s,v_i,k_i),F_7(s,v_i,k_i)\}$

We are counting just $s$ as an unmatched vertex with unmatched neighbor, since $r$ is assumed to be already counted.

\end{itemize}

Notice that as the values of the functions $F_i$ are bounded by
the number of vertices of the corresponding tree, and $k$ is also
bounded by that number, taking the minimum over
$k_1+\dots+k_{\ell}=k$ of some combination of these $F_i$ is
equivalent to solve a polynomially bounded number of knapsack
problems where both the weights and the utilities are polynomially
bounded as well, so this can be done by dynamic programming in
polynomial time \cite{Dantzig57}.

In this way, in order to obtain $F(T,k)$ for a nontrivial tree
$T$, we can root it at a leaf $r$ whose neighbor is $s$ and
compute $\min\{F_1(r,s,k),F_2(r,s,k),F_6(r,s,k)\}$. By keeping
some extra information, we can also obtain in polynomial time the
matching itself.
\end{proof}

\subsubsection{b-continuity and b-monotonicity of tree-cographs}

The following result was proved for union and join of graphs.

\begin{lemma}\label{lem:cont-union-join}\cite{Bonomo-et-al09}
Let $G_1 = (V_1,E_1)$ and $G_2 = (V_2,E_2)$ be two graphs such
that $V_1 \cap V_2 = \emptyset$. If $G_1$ and $G_2$ are
b-continuous, then $G_1 \cup G_2$ and $G_1 \vee G_2$ are b-continuous.
\end{lemma}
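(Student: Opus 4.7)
The plan is to treat the join and union cases separately, leveraging the formulas of Theorems~\ref{teo:Xb-join} and~\ref{teo:Xb-union} together with the hypothesis that $G_1$ and $G_2$ are b-continuous. In both cases, the goal is to exhibit a b-coloring of $G$ with $t$ colors for every $t$ in $[\chi(G),\chi_b(G)]$.

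For the join $G=G_1\vee G_2$, observe that $\chi(G)=\chi(G_1)+\chi(G_2)$, and that evaluating Theorem~\ref{teo:Xb-join} at $j=\chi_b(G_1)$ (permitted because b-continuity of each factor yields $\dom_{G_i}[\chi_b(G_i)]=\chi_b(G_i)$) gives $\chi_b(G)=\chi_b(G_1)+\chi_b(G_2)$. For any $t$ in the target range, I would write $t=j_1+j_2$ with $j_i\in[\chi(G_i),\chi_b(G_i)]$, e.g.\ $j_1:=\max\{\chi(G_1),\,t-\chi_b(G_2)\}$ and $j_2:=t-j_1$; this is feasible because the total slack matches on each side. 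Then b-continuity supplies b-colorings of $G_1,G_2$ with $j_1,j_2$ colors on disjoint palettes; since every vertex of $G_i$ is adjacent to every vertex of $G_{3-i}$, each $G_i$-dominant trivially sees the $j_{3-i}$ colors coming from $G_{3-i}$, so the union is a b-coloring of $G$ with $t$ colors.

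For the union $G=G_1\cup G_2$, one has $\chi(G)=\max\{\chi(G_1),\chi(G_2)\}$, and by Theorem~\ref{teo:Xb-union} it suffices to show $\dom_{G_1}[t]+\dom_{G_2}[t]\geq t$ for all $t\in[\chi(G),\chi_b(G)]$. I would split into two subcases. If $t\leq\max\{\chi_b(G_1),\chi_b(G_2)\}$, say $t\leq\chi_b(G_1)$, then b-continuity of $G_1$ yields $\dom_{G_1}[t]=t$; concretely, one b-colors $G_1$ with $t$ colors and extends to $G_2$ by any proper $t$-coloring, which exists since $t\geq\chi(G_2)$. For the remaining subcase $\max\{\chi_b(G_1),\chi_b(G_2)\}<t\leq\chi_b(G)$, I would use downward induction on $t$, with base case $t=\chi_b(G)$ holding by the very definition of $\chi_b(G)$. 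Given a b-coloring of $G$ with $t+1$ colors, one first notes that both $G_1$ and $G_2$ must use all $t+1$ colors: indeed any dominant $v\in V(G_i)$ sees its $t$ co-colors only within $V(G_i)$, so the presence of a dominant in $G_i$ forces $V(G_i)$ to display every color (and since $t+1>\chi_b(G_2)$ rules out concentrating all dominants on one side, both sides contain dominants). Then I would remove one color class and carefully redistribute its vertices inside each $G_i$, using b-continuity of $G_i$ to produce proper $t$-colorings of the two parts whose $G_i$-dominant color sets jointly cover all remaining $t$ colors.

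The main obstacle lies in this last step of the union case: the sequence $t\mapsto\dom_{G_i}[t]$ is not monotone in $t$, so the inequality at $t=\chi_b(G)$ does not propagate down to smaller $t$ by a mere algebraic interpolation. The crux is to choose which color to drop and how to reroute its vertices in each $G_i$ so that sufficiently many $G_i$-dominants survive in each part, effectively translating b-continuity of $G_1$ and $G_2$ into usable statements about proper $t$-colorings of $G_i$ with many $G_i$-dominants even when $t$ exceeds $\chi_b(G_i)$.
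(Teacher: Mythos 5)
The paper does not prove this lemma at all---it is imported verbatim from \cite{Bonomo-et-al09}---so your proposal can only be judged on its own merits, and as it stands it is not a complete proof. Your join case is correct and complete: the identity $\chi_b(G_1\vee G_2)=\chi_b(G_1)+\chi_b(G_2)$ follows from Theorem~\ref{teo:Xb-join} exactly as you say (using $\dom_{G_i}[j]\le j$ for the upper bound), the split $t=j_1+j_2$ with $j_i\in[\chi(G_i),\chi_b(G_i)]$ is feasible, and gluing b-colorings on disjoint palettes works because the join makes every $G_i$-dominant vertex see all colors used on $G_{3-i}$. The first subcase of the union ($t\le\max\{\chi_b(G_1),\chi_b(G_2)\}$) is also fine.

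The gap is the remaining subcase $\max\{\chi_b(G_1),\chi_b(G_2)\}<t\le\chi_b(G_1\cup G_2)$, which you explicitly leave open: you set up a downward induction but do not supply the inductive step, namely how to pass from a b-coloring of $G_1\cup G_2$ with $t+1$ colors to one with $t$ colors. This range is genuinely nonempty and genuinely requires work. For instance, with $G_1=G_2$ equal to the disjoint union of two copies of $K_{1,3}$ one has $\chi_b(G_1)=\chi_b(G_2)=2$ while $\chi_b(G_1\cup G_2)=4$, so the value $t=3$ lies strictly inside the problematic range and is covered neither by your first subcase nor by the base case $t=\chi_b(G)$ of your induction. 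Since $\dom_{G_i}[t]$ need not be monotone in $t$, the inequality $\dom_{G_1}[t]+\dom_{G_2}[t]\ge t$ at $t=\chi_b(G)$ does not interpolate downward for free; one needs a quantitative consequence of b-continuity of the parts (roughly, a lower bound on $\dom_{G_i}[t]$ in terms of $\dom_{G_i}[t+1]$, which is what \cite{Bonomo-et-al09} establishes) before Theorem~\ref{teo:Xb-union} can be applied. Until that step is carried out, the union half of the lemma is unproven.
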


As a corollary of the lemma, Theorem \ref{thm:bconti}, and the b-continuity of chordal graphs \cite{K-K-V-b-col,Faik-tesis}, we have the following result.

\begin{theorem}\label{thm:b-cont-tree-cographs}
Tree-cographs are b-continuous.
\end{theorem}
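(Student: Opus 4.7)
The plan is to proceed by structural induction following the recursive definition of tree-cographs given just before the theorem. There are two base cases to handle, corresponding to condition (i) of that definition, and a single inductive step that covers both conditions (ii) and (iii) simultaneously.

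For the base case, I would split according to whether the graph is a tree or a co-tree. Every tree is chordal, so b-continuity in this case is already furnished by the known b-continuity of chordal graphs~\cite{K-K-V-b-col,Faik-tesis}. Co-trees, on the other hand, are complements of trees; since trees are triangle-free, every co-tree has stability number at most two, and so Theorem~\ref{thm:bconti} applies and yields b-continuity. It is worth making this identification explicit in the write-up, because the theorem is stated for ``stability at most two'' rather than for co-trees directly.

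For the inductive step, suppose $G$ is built from two tree-cographs $G_1$ and $G_2$ by either disjoint union or join. By the inductive hypothesis, both $G_1$ and $G_2$ are b-continuous, and Lemma~\ref{lem:cont-union-join} immediately yields the b-continuity of $G_1 \cup G_2$ and of $G_1 \vee G_2$. Because the recursive construction of a tree-cograph is finite, the induction terminates and the theorem follows for every tree-cograph.

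I do not anticipate any real obstacle in carrying out this plan: the argument is essentially an assembly of already-cited results with the inductive gluing provided by Lemma~\ref{lem:cont-union-join}. The only point requiring any care is making sure both halves of the base case are correctly matched to an existing theorem, i.e.\ trees through the chordal case and co-trees through the stability-two case of Theorem~\ref{thm:bconti}.
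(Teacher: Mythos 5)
Your proposal is correct and follows exactly the paper's argument: the paper derives the theorem as a corollary of Lemma~\ref{lem:cont-union-join} (for the union and join steps), Theorem~\ref{thm:bconti} (for co-trees, via stability at most two), and the b-continuity of chordal graphs (for trees). Your write-up simply makes the implicit structural induction explicit, which is fine.
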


Concerning the b-monotonicity, the following results are known for general graphs and for union and join of graphs.

\begin{lemma}\label{lem:maximum}\cite{Bonomo-et-al09}
Let $G$ be a graph. The maximum value of $\dom_G[t]$ is attained
in $t = \chi_b(G)$.
\end{lemma}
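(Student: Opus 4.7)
The plan is to show that $\dom_G[t] \le \chi_b(G)$ for every admissible $t$. Combined with $\dom_G[\chi_b(G)] = \chi_b(G)$, which is immediate from the definition of $\chi_b(G)$ together with the equivalence ``$\dom_G[t]=t$ iff $G$ has a $b$-coloring with $t$ colors'' stated in the excerpt, this yields the claim. For $t \le \chi_b(G)$ we have $\dom_G[t] \le t \le \chi_b(G)$ for free, so the substantive case is $t > \chi_b(G)$. The strategy there is: given a proper $t$-coloring of $G$ realizing $d := \dom_G[t]$ dominating classes, I would construct a $b$-coloring of $G$ whose color classes include the $d$ dominating classes of the original coloring; this forces $\chi_b(G) \ge d$ and closes the argument.

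The construction is iterative. Label the given coloring $C_1,\ldots,C_t$ with dominating vertices $x_1\in C_1,\ldots,x_d\in C_d$, and repeatedly apply the following reduction step. If some color class $C_j$ is non-dominating, then for every $v \in C_j$ one can pick a color $k\neq j$ such that $v$ has no neighbor in $C_k$ (such a $k$ exists precisely because $v$ fails to be a dominating vertex), and reassign $v$ to color $k$. The vertices of $C_j$ form an independent set, so these simultaneous reassignments do not conflict with one another, and the output is a proper coloring that uses one fewer color than before.

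The critical invariant is that $x_1,\ldots,x_d$ remain dominating throughout. The reason is that the surviving classes only \emph{gain} vertices under the reassignment, so every pre-existing neighbor of $x_i$ in a class other than $C_i$ is still a neighbor of $x_i$ in that (now possibly larger) class, while the disappearing class $C_j$ no longer needs to be dominated. Consequently, $C_1,\ldots,C_d$ stay among the color classes and stay dominating at every step, which means the procedure must terminate at a proper coloring in which every color class is dominating, that is, a $b$-coloring of $G$ with at least $d$ colors. This gives $\chi_b(G) \ge d = \dom_G[t]$, as required. The main point to verify with care is the dominance invariant, which hinges on the monotone growth of all surviving color classes under the reassignment; the rest of the argument is bookkeeping.
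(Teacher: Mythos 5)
The paper does not prove this lemma itself---it is quoted from \cite{Bonomo-et-al09}---and your argument is correct and is essentially the standard proof given there: eliminate non-dominating classes one at a time by redistributing their vertices, observing that surviving classes only grow so dominating vertices stay dominating, which yields a $b$-coloring with at least $\dom_G[t]$ colors and hence $\dom_G[t]\le\chi_b(G)=\dom_G[\chi_b(G)]$. The two points that need care---that the simultaneous reassignment of the independent set $C_j$ stays proper, and that the dominance invariant follows from the monotone growth of the remaining classes---are both handled correctly in your write-up.
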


\begin{lemma}\label{lem:monot-union}\cite{Bonomo-et-al09}
Let $G_1 = (V_1,E_1)$ and $G_2 = (V_2,E_2)$ be two graphs such
that $V_1 \cap V_2 = \emptyset$, and let $G = G_1 \cup G_2$.
Assume that for every $t\ge\chi(G_i)$ and every induced subgraph
$H$ of $G_i$ we have $\dom_H[t] \leq \dom_{G_i}[t]$, for $i=1,2$.
Then, for every $t \geq \chi(G)$ and every induced subgraph $H$ of
$G$, $\dom_H[t] \leq \dom_{G}[t]$ holds.
\end{lemma}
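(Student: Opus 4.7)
The plan is to deduce the lemma directly from Theorem \ref{teo:Xb-union}, which already expresses the dominance vector of a disjoint union in closed form; essentially all of the combinatorial work is encoded in that theorem, and what remains is bookkeeping.

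First, given an induced subgraph $H$ of $G = G_1 \cup G_2$, I would decompose $H$ as $H = H_1 \cup H_2$, where $H_i := H[V(H) \cap V_i]$ is the induced subgraph of $G_i$ on the trace of $V(H)$ on $V_i$. Since $V_1 \cap V_2 = \emptyset$, this really is a disjoint union. Next I would check the chromatic number inequalities needed to apply everything: because $H_i$ is an induced subgraph of $G_i$, we have $\chi(H_i) \leq \chi(G_i) \leq \chi(G)$, and consequently $t \geq \chi(G)$ implies $t \geq \chi(G_i) \geq \chi(H_i)$ for $i=1,2$. In particular, the hypothesis of the lemma applies to each $(G_i, H_i)$ at this value of $t$, yielding $\dom_{H_i}[t] \leq \dom_{G_i}[t]$ for $i=1,2$.

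The conclusion then follows by invoking Theorem \ref{teo:Xb-union} twice — once on $H = H_1 \cup H_2$ and once on $G = G_1 \cup G_2$ — to obtain
\[
\dom_H[t] = \min\{t,\ \dom_{H_1}[t] + \dom_{H_2}[t]\}
\quad\text{and}\quad
\dom_G[t] = \min\{t,\ \dom_{G_1}[t] + \dom_{G_2}[t]\}.
\]
Summing the two inequalities from the previous paragraph gives $\dom_{H_1}[t] + \dom_{H_2}[t] \leq \dom_{G_1}[t] + \dom_{G_2}[t]$, and since $x \mapsto \min\{t,x\}$ is monotone nondecreasing, this yields $\dom_H[t] \leq \dom_G[t]$, which is precisely the desired conclusion.

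There is no serious obstacle here; the only minor point of care is the degenerate case where one of $H_1, H_2$ is empty (for instance, when $V(H) \subseteq V_2$). I would handle this by adopting the natural convention $\dom_{\emptyset}[t] = 0$ for all $t \geq 0$, under which both Theorem \ref{teo:Xb-union} and the hypothesis continue to make sense and the argument above goes through without modification. The structural reason the proof is so short is that disjoint union completely decouples the two factors from the point of view of b-colorings, so monotonicity on each factor straightforwardly lifts to monotonicity on the union.
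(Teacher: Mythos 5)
Your argument is correct: decomposing $H$ into its traces $H_1, H_2$ on $V_1, V_2$, applying the hypothesis to each pair $(G_i,H_i)$ (which is legitimate since $t \geq \chi(G) \geq \chi(G_i) \geq \chi(H_i)$), and then invoking Theorem~\ref{teo:Xb-union} on both $G$ and $H$ together with the monotonicity of $x \mapsto \min\{t,x\}$ gives exactly the claimed inequality, and your handling of the case $V(H) \subseteq V_j$ via $\dom_{\emptyset}[t]=0$ is consistent with the paper's convention that $\dom_X[t]=0$ for $t>|V(X)|$. Note that the paper itself states this lemma as a citation to \cite{Bonomo-et-al09} and gives no proof, so there is nothing in this document to compare against; your derivation is the natural one and is complete as written.
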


\begin{lemma}\label{lem:monot-join}\cite{Bonomo-et-al09}
Let $G_1 = (V_1,E_1)$ and $G_2 = (V_2,E_2)$ be two b-continuous
graphs such that $V_1 \cap V_2 = \emptyset$, and let $G = G_1 \vee
G_2$. Assume that for every $t \geq \chi(G_i)$ and for every
induced subgraph $H$ of $G_i$ we have $\dom_H[t] \leq
\dom_{G_i}[t]$, for $i=1,2$. Then, for every $t \geq \chi(G)$ and
for every induced subgraph $H$ of $G$, $\dom_H[t] \leq
\dom_{G}[t]$ holds.
\end{lemma}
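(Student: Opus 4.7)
Let $H$ be an induced subgraph of $G = G_1 \vee G_2$. Since all edges between $V_1$ and $V_2$ are present in $G$, $H$ factors as $H = H_1 \vee H_2$ where $H_i := H[V(G_i)]$ is an induced subgraph of $G_i$. Fix $t \geq \chi(G) = \chi(G_1) + \chi(G_2)$. Applying Theorem~\ref{teo:Xb-join} to $H$, choose $j^*$ in $H$'s admissible range realizing $\dom_H[t] = \dom_{H_1}[j^*] + \dom_{H_2}[t - j^*]$. The plan is to exhibit an index $j$ in $G$'s admissible range satisfying $\dom_{G_1}[j] + \dom_{G_2}[t - j] \geq \dom_H[t]$; combined with Theorem~\ref{teo:Xb-join} applied to $G$, this yields the required $\dom_G[t] \geq \dom_H[t]$.

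The first case is when $j^*$ already lies in $G$'s admissible range, which reduces to $\chi(G_1) \leq j^*$ and $\chi(G_2) \leq t - j^*$ (the upper bounds being automatic from $|V(H_i)| \leq |V(G_i)|$). Here the hypothesis $\dom_{H_i}[s] \leq \dom_{G_i}[s]$ applied at $s = j^*$ and $s = t - j^*$ immediately yields the required bound with $j := j^*$. Otherwise, by symmetry I may assume $j^* < \chi(G_1)$. In this case I take $j := \chi(G_1)$; it lies in $G$'s admissible range because $t \geq \chi(G)$ forces $j \leq t - \chi(G_2)$, while $j^* < \chi(G_1)$ together with $j^* \geq t - |V(H_2)| \geq t - |V(G_2)|$ forces $t - |V(G_2)| < j = \chi(G_1)$. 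B-continuity of $G_1$ gives $\dom_{G_1}[\chi(G_1)] = \chi(G_1)$.

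Using $\dom_{H_1}[j^*] \leq j^*$ and applying the hypothesis to $H_2 \subseteq G_2$ at index $t - j^* > \chi(G_2)$, I reduce the required inequality to the \emph{Lipschitz-type} bound
\[
\dom_{G_2}[s - k] \geq \dom_{G_2}[s] - k, \qquad s := t - j^*, \quad k := \chi(G_1) - j^*,
\]
for the dominance vector of $G_2$. Granting this bound,
\[
\chi(G_1) + \dom_{G_2}[t - \chi(G_1)] \geq j^* + \dom_{G_2}[t - j^*] \geq j^* + \dom_{H_2}[t - j^*] \geq \dom_H[t],
\]
closing the argument.

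The main obstacle is establishing this Lipschitz-type inequality for $\dom_{G_2}$. I would approach it via iterated merging of color classes: starting from an optimal $s$-coloring of $G_2$ that achieves $\dom_{G_2}[s]$, merging two color classes with no edge between them produces an $(s-1)$-coloring of $G_2$ with at most one fewer dominant class, and iterating $k$ times gives the desired bound. The subtlety is that a mergeable pair may not be present in an arbitrary optimal coloring; this is where b-continuity of $G_2$ enters crucially, providing enough structural flexibility in the space of optimal $s$-colorings to locate (or modify one to admit) a mergeable pair at each step of the iteration.
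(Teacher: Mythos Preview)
The paper does not supply a proof of this lemma; it is quoted from \cite{Bonomo-et-al09}. Your overall strategy---apply Theorem~\ref{teo:Xb-join} to both $H = H_1 \vee H_2$ and $G$, then exhibit an admissible index $j$ for $G$ with $\dom_{G_1}[j]+\dom_{G_2}[t-j] \ge \dom_H[t]$---is correct, as is the case split on whether $j^*$ already lies in $G$'s admissible range. (Minor point: $\dom_{G_1}[\chi(G_1)]=\chi(G_1)$ holds for every graph, so b-continuity of $G_1$ is not needed at that particular step.)

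The genuine gap is precisely where you place it: the inequality $\dom_{G_2}[s-k]\ge\dom_{G_2}[s]-k$ is asserted but not proved. Your suggested mechanism, merging two color classes with no edge between them, is fine when such a pair exists, but your fallback---that b-continuity ``provides enough structural flexibility'' to manufacture one---is a hand-wave, not an argument; b-continuity says nothing about the structure of $s$-colorings with $s>\chi_b(G_2)$, which is exactly where the mergeable-pair obstruction can arise. The correct step is \emph{redistribution} rather than pairwise merging: if $\dom_{G_2}[s]<s$, an optimal $s$-coloring has a non-dominant class $C$, and recoloring each $v\in C$ into a class it misses (this is proper, since $C$ is independent and $v$ has no neighbor in the target class) yields an $(s{-}1)$-coloring in which every formerly dominant class remains dominant, giving $\dom_{G_2}[s-1]\ge\dom_{G_2}[s]$. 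If instead $\dom_{G_2}[s]=s$, then $s\le\chi_b(G_2)$ and b-continuity gives $\dom_{G_2}[s-1]=s-1$. Either way $\dom_{G_2}[s-1]\ge\dom_{G_2}[s]-1$; iterating yields your Lipschitz bound and completes the proof.
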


In order to prove the b-monotonicity of tree-cographs, we need the following two lemmas.

\begin{lemma}\label{lem:mon-dom-trees}
Let $T$ be a tree and $H$ an induced subgraph of $T$. Then for every $t \geq 2$, $\dom_H[t] \leq
\dom_T[t]$.
\end{lemma}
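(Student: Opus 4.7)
I would proceed by case analysis on $t \ge 2$, relying on the explicit formulas for $\dom_T[\cdot]$ on trees given in Lemma~\ref{dom-vec-trees-l1}, Lemma~\ref{dom-vec-trees-l2}, and the unnamed lemma concluding the subsubsection on dominance vectors for trees. The key elementary bound is $\dom_H[t] \le m_t(H) \le m_t(T)$, where $m_t(G) := |\{v \in V(G) : d_G(v) \ge t-1\}|$: each dominant color class needs a distinct vertex of degree $\ge t-1$, and induced subgraphs can only decrease degrees. This already handles two regimes: for $2 \le t \le \chi_b(T)$, b-continuity of chordal graphs gives $\dom_T[t] = t \ge \dom_H[t]$; for $t > m(T)$, the cited formula gives $\dom_T[t] = m_t(T) \ge m_t(H) \ge \dom_H[t]$. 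By Theorem~\ref{theo-trees1}, the only remaining range is $t = m(T)$ with $T$ pivoted, in which case $\dom_T[m(T)] = m(T)-1$.

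Hence the substantial step is to prove $\dom_H[m(T)] \le m(T)-1$ for a pivoted tree $T$. I would argue by contradiction: assume $H$ admits an $m(T)$-b-coloring. Its $m(T)$ dominant vertices have $T$-degree at least $m(T)-1$ and are thus dense in $T$; since $T$ has exactly $|D| = m(T)$ dense vertices, these dominant vertices are precisely the set $D$. In particular $D \subseteq V(H)$ and $d_H(u) \ge m(T)-1$ for each $u \in D$. Let $v$ be the pivot of $T$, and split according to whether $v \in V(H)$.

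If $v \in V(H)$, the component $H'$ of $H$ containing $v$ includes $D \cup \{v\}$, because every $u \in D_1$ is joined to $v$ and every $u \in D_2$ is joined to some $w \in D_1$ by an edge of $T$ (hence of $H$, since all endpoints lie in $V(H)$). A careful check then shows $m(H') = m(T)$, the dense vertices of $H'$ are exactly $D$, and $H'$ satisfies conditions (1)--(3) with pivot $v$, so $H'$ is a pivoted tree with $\chi_b(H') = m(T)-1$ by Theorem~\ref{theo-trees1}. But the restriction of the $m(T)$-b-coloring of $H$ to $V(H')$ is still an $m(T)$-b-coloring---each dominant vertex and its entire $H$-neighborhood lie inside $H'$---a contradiction. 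If $v \notin V(H)$, then $|D_1| \le d_T(v) \le m(T)-2$ forces $|D_2| \ge 2$; picking $w \in D_2$, its unique neighbor $u \in D_1$ (unique because $T$ is a tree) is a dense vertex adjacent to both $v$ and to the dense vertex $w$, so condition~(3) of the pivoted definition forces $d_T(u) = m(T)-1$, and removing $v$ drops $u$'s degree in $H$ to at most $m(T)-2$, contradicting $u \in D$ being dense in $H$. The main obstacle will be the careful verification in the first sub-case that $H'$ is indeed pivoted: transferring condition~(3) requires using that the dense neighbors of $v$ in a tree form an independent set, so that a dense vertex of $H'$ adjacent to $v$ and to another dense vertex must lie in $D_1$ with a neighbor in $D_2$, on which condition~(3) of $T$ directly applies.
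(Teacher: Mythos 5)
Your proof is correct and follows essentially the same three-regime structure as the paper's (b-continuity for $t\leq\chi_b(T)$, the bound $\dom_H[t]\leq m_t(H)\leq m_t(T)=\dom_T[t]$ for $t>m(T)$, and a separate treatment of $t=m(T)$ when $T$ is pivoted). For the pivoted case the paper merely asserts the dichotomy ``either $m(H)<m(T)$ or the component of $H$ containing the dense vertices is again pivoted,'' whereas you supply the verification via the case split on whether the pivot survives in $H$; this is a welcome elaboration of the same idea rather than a different route.
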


\begin{proof}
It is clear that it holds for $t \leq \chi_b(T)$. If $T$ is a pivoted tree, then either $m(H) < m(T)$ or the connected component of $H$ containing the dense vertices is a pivoted tree as well.
In any case, $\dom_H[m(T)] \leq
\dom_T[m(T)]$. For $t > m(T)$, let $T_j$, $j=1,\dots,k$, be the connected components of $H$. It is clear that $\sum_{j=1,\dots,k} m_t(T_j) \leq m_t(T)$, so by Theorem \ref{teo:Xb-union}, $\dom_H[t] \leq
\dom_T[t]$.
\end{proof}

\begin{lemma}\label{lem:mon-dom-cotrees}
Let $G$ be a graph with stability at most two and $H$ an induced subgraph of $G$. Then for every $t \geq \chi(G)$, $\dom_{H}[t] \leq
\dom_{G}[t]$.
\end{lemma}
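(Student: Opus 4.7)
My plan is to proceed by induction on $|V(G)\setminus V(H)|$, reducing to the one-vertex-deletion case: I will show that $\dom_{G-v}[t]\le\dom_G[t]$ for every $v\in V(G)$ and every $t\ge\chi(G)$. Using the identity $\dom_G[t]=t-F(\overline{G},|V(G)|-t)$ introduced above (where $F(\overline{G},k)$ is the minimum of $S_1+S_2$ over matchings $M$ of $\overline{G}$ of size $k$), this reduces to the matching statement: given any matching $M$ of $\overline{G-v}$ of size $k=|V(G-v)|-t$, construct a matching $M'$ of $\overline{G}$ of size $k+1$ with $(S_1+S_2)(\overline{G},M')\le(S_1+S_2)(\overline{G-v},M)$. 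The existence of matchings of size $k+1$ in $\overline{G}$ is guaranteed by $t\ge\chi(G)$.

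Regarding $M$ as a matching of $\overline{G}$ in which $v$ is unmatched, I split on whether $v$ has an unmatched neighbour in $\overline{G}$. If $v$ has such a neighbour $u$, I take $M':=M\cup\{uv\}$. Then the unmatched set shrinks (one loses $u$ and $v$), so $S_1(\overline{G},M')\le S_1(\overline{G-v},M)$; for any old edge $xy\in M$, the condition to be counted in $S_2(\overline{G},M')$ is more restrictive than to be counted in $S_2(\overline{G-v},M)$ (fewer unmatched vertices available), so the contribution of old edges to $S_2$ does not grow; and the only new edge $uv$ can contribute to $S_2(\overline{G},M')$ only when $u$ has an unmatched neighbour other than $v$ in $\overline{G-v}$, which places $u$ already in $S_1(\overline{G-v},M)$ and provides the saving needed to compensate.

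If instead every neighbour of $v$ in $\overline{G}$ is matched by $M$, then $v$ contributes nothing to $S_1$ in $\overline{G}$, so $(S_1+S_2)(\overline{G},M)=(S_1+S_2)(\overline{G-v},M)$. Since $t\ge\chi(G)$ ensures that $\overline{G}$ admits a matching of size $k+1$, Berge's lemma yields an augmenting path $P$ for $M$ in $\overline{G}$; as $v$ is unmatched while internal vertices of any augmenting path are matched, either $P$ starts at $v$ or $P$ avoids $v$ altogether. I choose $P$ to be of minimum length and set $M':=M\triangle E(P)$. Crucially, the neighbours of $v$ in $\overline{G}$ that were matched by $M$ stay matched by $M'$ (they are not endpoints of $P$), so $v$ again has no unmatched neighbour in $M'$; it then suffices to compare the contributions inside $\overline{G-v}$ together with the single new edge incident to $v$ when $P$ starts at $v$.

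The main obstacle lies in this second case: a generic augmentation may create new length-$3$ augmenting paths, and controlling their number along a minimum-length $P$ is the delicate part. I expect to exploit the triangle-freeness of $\overline{G}$ (equivalent to $\alpha(G)\le2$) to rule out pathological configurations around $P$, adapting the reasoning behind Lemma~\ref{le-b-cont-2} from the strongly maximal setting to general matchings, and using the slack available in $(S_1+S_2)(\overline{G-v},M)$ to absorb any unavoidable new obstructions created by the augmentation.
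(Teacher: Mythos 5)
Your overall framework is the same as the paper's: reduce to the deletion of a single vertex $v$, translate via $\dom_G[t]=t-F(\overline G,|V(G)|-t)$ into comparing $F(\overline G,k+1)$ with $F(\overline{G-v},k)$, and split on whether $v$ has an unmatched neighbour in $\overline G$; your first case ($M'=M\cup\{uv\}$) coincides with the paper's and your accounting there is sound. The gap is the second case, which you explicitly leave as a plan, and the plan as stated does not work. The paper splits this case further: if $v$ has no unmatched neighbour but \emph{is} the endpoint of a length-$3$ augmenting path $vxyw$ in $\overline G$, one must augment along \emph{that} path, taking $M'=(M\setminus\{xy\})\cup\{vx,yw\}$, which need not be a minimum-length augmenting path; only when $v$ lies on no augmenting path of length at most $3$ does one augment along a minimum-length path (and there the paper first disposes of the range $\chi(G)\le t\le\chi_b(G)$, which is trivial, so that in the remaining range the minimum length is guaranteed to be $1$ or $3$). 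Augmenting along a minimum-length path regardless, as you propose, fails because reinserting the unmatched vertex $v$ can simultaneously turn many edges of $M$ into centres of length-$3$ augmenting paths passing through $v$, and matching $v$ is the only move that destroys all of them at once; no amount of ``slack'' from a single augmentation elsewhere compensates.

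Concretely, let $\overline G$ be the forest with edges $vx_i$, $x_iy_i$, $y_iw_i$ for $i=1,2,3$ together with an isolated edge $ab$, and take $M=\{x_1y_1,x_2y_2,x_3y_3\}$, an optimal matching of $\overline{G-v}$ of size $k=3$ with $S_1+S_2=2$ (only $a$ and $b$ are counted). Every neighbour of $v$ is matched and the minimum-length augmenting path in $\overline G$ is the edge $ab$, so your construction gives $M'=M\cup\{ab\}$, for which $S_1=0$ but $S_2=3$, since each $x_iy_i$ is now the centre of the augmenting path $vx_iy_iw_i$; thus $S_1+S_2$ increases from $2$ to $3$. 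The paper's choice $M'=\{vx_1,y_1w_1,x_2y_2,x_3y_3\}$ yields $S_1+S_2=2$. So the ``delicate part'' you defer is not a routine adaptation of Lemma~\ref{le-b-cont-2}: it requires inserting the intermediate case in which the augmenting path is chosen to start at $v$, and only afterwards does the minimum-length argument (on the restricted range $t>\chi_b(G)$) go through.
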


\begin{proof}

The class of graphs of stability at most two is closed under
taking induced subgraphs. Thus we only have to prove that
$\dom_{G}[t]$ for a fixed $t$ is monotonously decreasing under the
deletion of a vertex. Let $H=G-v$ for some vertex $v$ of $G$. It
is clear that $\dom_{H}[t] \leq \dom_{G}[t]$ for $\chi(G) \leq t
\leq \chi_b(G)$ and for $t > |V(H)| = |V(G)|-1$. For $|V(H)| \geq
t > \chi_b(G)$, as we observed before, since $G$ and $H$ have
stability at most two, $\mbox{dom}_{G}[t] =
t-F(\overline{G},|V(G)|-t)$, and $\mbox{dom}_{H}[t] =
t-F(\overline{H},|V(H)|-t)$, where $F(X,k)$ stands for the minimum
sum of the number of unmatched vertices that have at least an
unmatched neighbor and the number of edges of $M$ that are the
center of an augmenting path of length $3$ for $M$, over all the
matchings $M$ of a graph $X$ with $|M|=k$. Then $\mbox{dom}_{G}[t]
\geq \mbox{dom}_{H}[t]$ if and only if $F(\overline{G},|V(G)|-t) =
F(\overline{G},|V(H)|+1-t) \leq F(\overline{H},|V(H)|-t)$.

Let $M$ be a matching of $\overline{H}$ that realizes this
minimum, and consider $M$ as a matching of $\overline{G}$. We need
to find a matching $M'$ of $\overline{G}$ with $|M'|=|M|+1$, which
is always posible, since $t > \chi(G)$ and then $|V(G)|-t$ is
strictly smaller than the size of a maximum matching of
$\overline{G}$.

We will consider now three cases. If $v$ has an unmatched neighbor
$w$, then let $M' = M \cup \{vw\}$. In this way,
$S_1(\overline{G},M') \leq S_1(\overline{H},M)$ and no edge of $M$
becomes the center of an augmenting path of length $3$ for $M'$ in
$\overline{G}$. Moreover, if $vw$ is the center of an augmenting
path of length $3$ for $M'$ in $\overline{G}$, then $w$ was for
$M$ an unmatched vertex of $\overline{H}$ having an unmatched
neighbor. In this case, $S_2(\overline{G},M') =
S_2(\overline{H},M)+1$ but $S_1(\overline{G},M') \leq
S_1(\overline{H},M)-1$. In any case, $F(\overline{G},|V(G)|-t)
\leq S_1(\overline{G},M')+S_2(\overline{G},M') \leq
S_1(\overline{H},M)+S_2(\overline{H},M) =
F(\overline{H},|V(H)|-t)$.

If $v$ has no unmatched neighbor but it is the end of an augmenting
path of length $3$ for $M$ in $\overline{G}$, say $vxyw$, let
$M'=M\setminus \{xy\} \cup \{vx,yw\}$. There are no new unmatched
vertices, so $S_1(\overline{G},M') \leq S_1(\overline{H},M)$ and
no edge of $M$ becomes the center of an augmenting path of length
$3$ for $M'$ in $\overline{G}$. Neither does $vx$, since $v$ had
no unmatched neighbor. If $yw$ is the center of an augmenting path
of length $3$ for $M'$ in $\overline{G}$, then $w$ was for $M$ an
unmatched vertex of $\overline{H}$ having an unmatched neighbor.
In this case $S_2(\overline{G},M') = S_2(\overline{H},M)+1$ but
$S_1(\overline{G},M') \leq S_1(\overline{H},M)-1$, so we are done.

Finally, if $v$ does not have an unmatched neighbor and it is not
the end of an augmenting path of length $3$ for $M$ in
$\overline{G}$, i.e., $S_1(\overline{G},M) = S_1(\overline{H},M)$
and $S_2(\overline{G},M) = S_2(\overline{H},M)$, let $P$ be a
minimum length augmenting path in $\overline{G}$ with respect to
$M$, and let $M' = (M \setminus E(P)) \cup (E(P) \setminus M)$.
There are no new unmatched vertices, so $S_1(\overline{G},M') \leq
S_1(\overline{H},M)$ and no edge of $M$ becomes the center of an
augmenting path of length $3$ for $M'$ in $\overline{G}$. If $P$
is of length $1$, then $S_2(\overline{G},M') \leq
S_2(\overline{H},M)+1$ but $S_1(\overline{G},M') \leq
S_1(\overline{H},M)-2$, so we are done. If $P$ is of length $3$,
we are eliminating an augmenting path of length $3$ and no new
edge becomes the center of an augmenting path of length $3$,
because otherwise there were couples of adjacent unmatched
vertices and we are supposing $P$ is of minimum length. Thus
$S_1(\overline{G},M') = S_1(\overline{H},M) = 0$ and
$S_2(\overline{G},M') \leq S_2(\overline{H},M)-1$, so we are done.
If there were no augmenting paths of length $1$ or $3$, $M$ is a
strongly maximal matching of $\overline{G}$ and then
$|V(G)|-|V(H)|+t = t+1 \leq \chi_b(\overline{G})$, a contradiction
because we were supposing $t > \chi_b(\overline{G})$. \end{proof}

So, we can conclude the following.

\begin{theorem}
Tree-cographs are b-monotonic.
\end{theorem}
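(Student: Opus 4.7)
The plan is to prove by induction on the recursive construction of the tree-cograph $G$ the stronger statement: \emph{for every $t \geq \chi(G)$ and every induced subgraph $H$ of $G$, $\dom_{H}[t] \leq \dom_{G}[t]$}. Once this is established, b-monotonicity follows immediately: by Lemma~\ref{lem:maximum}, $\chi_b(G) = \max_t \dom_G[t]$ (and similarly for $H$), so the coordinatewise inequality yields $\chi_b(H) \leq \chi_b(G)$ for every induced subgraph $H$ of $G$, and because the class of tree-cographs is closed under taking induced subgraphs, the same argument applied to $H_1 \supseteq H_2$ gives $\chi_b(H_1) \geq \chi_b(H_2)$.

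For the base case, $G$ is either a tree or a co-tree. If $G$ is a tree, Lemma~\ref{lem:mon-dom-trees} gives exactly the desired inequality. If $G$ is a co-tree, then $G$ has stability at most two, so Lemma~\ref{lem:mon-dom-cotrees} applies.

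For the inductive step, $G = G_1 \cup G_2$ or $G = G_1 \vee G_2$, where $G_1$ and $G_2$ are tree-cographs on disjoint vertex sets. By the inductive hypothesis, each $G_i$ satisfies $\dom_{H}[t] \leq \dom_{G_i}[t]$ for every $t \geq \chi(G_i)$ and every induced subgraph $H$ of $G_i$. In the union case, Lemma~\ref{lem:monot-union} directly lifts this property to $G$. In the join case, Lemma~\ref{lem:monot-join} does the same, but requires that $G_1$ and $G_2$ be b-continuous; this is guaranteed by Theorem~\ref{thm:b-cont-tree-cographs}, since $G_1$ and $G_2$ are themselves tree-cographs.

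There is no real obstacle here: all the pieces are in place, and the argument is a clean induction assembling Lemmas~\ref{lem:monot-union}, \ref{lem:monot-join}, \ref{lem:mon-dom-trees}, \ref{lem:mon-dom-cotrees} together with Theorem~\ref{thm:b-cont-tree-cographs}. The only point worth stating explicitly is that the recursive decomposition ensures every intermediate graph appearing in the induction is itself a tree-cograph, so that b-continuity is available whenever Lemma~\ref{lem:monot-join} is invoked.
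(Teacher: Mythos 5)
Your proposal is correct and takes essentially the same approach as the paper: an induction along the tree-cograph decomposition establishing $\dom_H[t]\leq\dom_G[t]$ via Lemmas~\ref{lem:mon-dom-trees}, \ref{lem:mon-dom-cotrees}, \ref{lem:monot-union}, \ref{lem:monot-join} and Theorem~\ref{thm:b-cont-tree-cographs}, concluded with Lemma~\ref{lem:maximum}. The only detail the paper spells out that you gloss over is the case $\chi_b(H)<\chi(G)$, where the coordinatewise inequality is not available at $t=\chi_b(H)$ (it is only asserted for $t\geq\chi(G)$) but the conclusion $\chi_b(H)<\chi(G)\leq\chi_b(G)$ holds trivially.
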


\begin{proof} As tree-cographs are hereditary, it is enough to prove
that given a tree-cograph $G$, $\chi_b(G) \geq \chi_b(H)$, for every
induced subgraph $H$ of $G$. By the decomposition structure of tree-cographs \cite{Tinhofer89} and
Theorem~\ref{thm:b-cont-tree-cographs}, Lemmas~\ref{lem:monot-union}, \ref{lem:monot-join}, \ref{lem:mon-dom-trees}, and \ref{lem:mon-dom-cotrees}, an induction argument shows that for
every tree-cograph $G$, every $t \geq \chi(G)$, and every induced
subgraph $H$ of $G$, $\dom_H[t] \leq \dom_{G}[t]$ holds. Let $G$
be a tree-cograph, and let $H$ be an induced subgraph of $G$. If
$\chi_b(H) < \chi(G)$, then $\chi_b(H) < \chi_b(G)$. Otherwise,
$\chi_b(H) = \dom_H[\chi_b(H)] \leq \dom_{G}[\chi_b(H)]$, and by
Lemma~\ref{lem:maximum} $\dom_{G}[\chi_b(H)] \leq
\dom_{G}[\chi_b(G)] = \chi_b(G)$. Hence $\chi_b(G) \geq
\chi_b(H)$.\end{proof}


\bibliographystyle{plain}

\end{document}